\documentclass[12pt]{amsart}

\usepackage{microtype}%
\usepackage{url}
\usepackage{graphicx}
\usepackage[margin=2.5cm]{geometry}
\usepackage{amsmath}
\usepackage{amsfonts}
\usepackage{latexsym}
\usepackage{amssymb}
\usepackage{enumerate}
\usepackage{tikz}
\usetikzlibrary{shapes,snakes}
\usetikzlibrary{automata,positioning}

\usepackage{hyperref}

\usepackage{todonotes}
\newtheorem{definition}{Definition}[section]
\newtheorem{lemma}[definition]{Lemma}
\newtheorem{proposition}[definition]{Proposition}

\newtheorem{theorem}[definition]{Theorem}

\newtheorem{maintheorem}{Theorem}

\theoremstyle{definition}
\newtheorem{example}[definition]{Example}

\newcommand{\N}{\mathbb{N}}
\newcommand{\Z}{\mathbb{Z}}

\newcommand{\R}{\mathbb{R}}

\newcommand{\Bcal}{\mathcal{B}}

\newcommand{\Fcal}{\mathcal{F}}

\newcommand{\Scal}{\mathcal{S}}
\newcommand{\Tcal}{\mathcal{T}}

\newcommand{\Zcal}{\mathcal{Z}}
\newcommand{\0}{\mathtt{0}}
\newcommand{\1}{\mathtt{1}}
\newcommand{\2}{\mathtt{2}}
\newcommand{\Alsig}{\{\0,\1\}}
\newcommand{\Aldel}{\{\0,\1,\2\}}

\newcommand{\bn}{{\boldsymbol{n}}}

\DeclareMathOperator{\rep}{rep}
\DeclareMathOperator{\pad}{pad}

\DeclareMathOperator{\val}{val}

\DeclareMathOperator{\sumF}{sum}
\DeclareMathOperator{\last}{\rm last}
\DeclareMathOperator{\rad}{\rm rad}
\DeclareMathOperator{\rev}{\rm rev}

\newcommand{\valTwoC}{\val_{2c}}
\newcommand{\valF}{\val_\Fcal}
\newcommand{\valFc}{\val_{\Fcal c}}
\newcommand{\repF}{\rep_\Fcal}
\newcommand{\repFc}{\rep_{\Fcal c}}

\allowdisplaybreaks

\begin{document}

\title{A Fibonacci analogue of the two's complement numeration system}

\author[S.~Labb\'e]{S\'ebastien Labb\'e}
\address[S.~Labb\'e]{Univ. Bordeaux, CNRS, Bordeaux INP, LaBRI, UMR 5800, F-33400 Talence, France}
\email{sebastien.labbe@labri.fr}
\urladdr{http://www.slabbe.org/}

\author[J.~Lep\v{s}ov\'a]{Jana Lep\v{s}ov\'a}
\address[J.~Lep\v{s}ov\'a]{FNSPE, CTU in Prague, Trojanova 13, 120 00 Praha, Czech Republic}
\email{jana.lepsova@labri.fr}

\keywords{two's complement \and numeration system \and transducer \and Fibonacci}
\subjclass[2020]{Primary 68Q45; Secondary 11A63 \and 11B39}

\maketitle

\begin{abstract}
Using the classic two's complement notation of signed integers, the fundamental arithmetic operations of addition, subtraction, and multiplication are identical to those for unsigned binary numbers. We introduce a Fibonacci-equivalent of the two's complement notation and we show that addition in this numeration system can be performed by a deterministic finite-state transducer. The result is based on the Berstel adder, which performs addition of the usual Fibonacci representations of nonnegative integers and for which we provide a new constructive proof. Moreover, we characterize the Fibonacci-equivalent of the two's complement notation as an increasing bijection between $\mathbb{Z}$ and a~particular language.
\end{abstract}

\section{Introduction}

A nonnegative integer can be written as a sum of powers of 2, which gives rise to
its binary representation over the alphabet $\Sigma=\{\0,\1\}$.
Binary representations can be added with a standard algorithm, starting from 
the least significant digit and transferring a carry at each step. 
Provided that the representations differ in length,
the shorter one is padded with a prefix of leading zeroes, as in the following example:
\begin{equation*}
    \begin{aligned}
        &11 \quad&\texttt{01011} \\
        +&17 \quad&\texttt{10001} \\[-.4cm]
        \cline{1-3}
        &28 \quad&\texttt{11100}
    \end{aligned}
\end{equation*}

\subsection*{Two's complement numeration system}
Among all the ways to generalize this approach to~$\Z$
is the two's complement notation; see \cite[\S 4.1]{MR3077153}.
In the two's complement representation of integers,
the value of a binary word $w=w_{k-1} w_{k-2}\cdots w_1w_0\in\Sigma^{k}$
is
\begin{equation}\label{eq:twos-complement-value}
    \valTwoC(w) 
    = - w_{k-1}2^{k} + \sum_{i=0}^{k-1}w_i 2^i.
\end{equation}
The value $\valTwoC(w)$ is congruent modulo $2^k$ to the value 
of $w$ as an unsigned integer.
The right-hand side of Equation~\eqref{eq:twos-complement-value}
can be simplified to $- w_{k-1}2^{k-1}+\sum_{i=0}^{k-2}w_i 2^i $,
which is commonly used.
However, 
as we will see shortly,
Equation~\eqref{eq:twos-complement-value} in the unsimplified form lends itself to
a straightforward generalization using
the Fibonacci numbers.
It can be seen that for every $w\in\Sigma^*$, 
$\valTwoC(\0\0w) = \valTwoC(\0w)$
and 
$\valTwoC(\1\1w) = \valTwoC(\1w)$.
Thus $\0$ and $\1$ are neutral prefixes which
can be used to pad the representations of nonnegative and negative numbers, respectively,
keeping the two's complement value invariant.
For every $n\in\Z$ there exists a unique word 
$w\in\Sigma^+\setminus \left( \0\0\Sigma^* \cup \1\1\Sigma^* \right)$
such that $n=\valTwoC(w)$. 
The word $w$ is called the \emph{two's complement representation} of the integer $n$
and we denote it by $\rep_{2c}(n)$.

In the two's complement notation the fundamental
arithmetic operations of addition, subtraction, and multiplication are
identical to those for unsigned binary representations.
To illustrate this, we perform the addition of the representations shown
previously, this time interpreting them in the two's complement notation.
The first word has the same value $\valTwoC(\0\1\0\1\1)=2^3+2^1+2^0=11$,
whereas the second word evaluates as $\valTwoC(\1\0\0\0\1)=-2^4+2^0=-15$.
We compute:
\begin{equation*}
    \begin{aligned}
        &11  \quad&\texttt{01011} \\
        -&15 \quad&\texttt{10001} \\[-.4cm] 
        \cline{1-3}
        -&4 \quad&\texttt{11100}
    \end{aligned}
\end{equation*}
The value of the resulting word is $\valTwoC(\1\1\1\0\0)=-2^4+2^3+2^2=-4$,
which confirms that the computation is correct. Notice that
the negative integer $-4$ has a shorter two's complement representation
$\rep_{2c}(-4)=\1\0\0$, which no longer contains the neutral prefix.

\subsection*{Fibonacci numeration system}
Integers can also be expressed in other numeration systems
\cite[\S 7]{MR1905123}, see also \cite{MR777556,MR1411227,MR2766740}.
A typical example uses the Fibonacci numbers instead of the powers of 2.
Let $(F_n)_{n\geq 0}$ be the Fibonacci sequence
defined by the recurrence relation
$F_{n} = F_{n-1} + F_{n-2}$, for all $n \geq 2$,
and the initial conditions
$ F_0 = 1$, $F_1 = 2$, following a convention for the Fibonacci numeration
system $\Fcal$ \cite{MR942576}.
In this numeration system,
the value of a binary word $w=w_{k-1} w_{k-2} \cdots w_1  w_0 \in\Sigma^k$ is
\[
\valF(w)=\sum_{i=0}^{k-1}w_iF_i.
\]
A result attributed to Zeckendorf \cite{MR58626,MR112863,MR236094,MR308032}
states that for every nonnegative integer $n$ there exists a unique binary word
$w\in \Sigma^*\setminus(\Sigma^*\1\1\Sigma^*\cup\0\Sigma^*)$ such that $n=\valF(w)$.
Note that this result appeared earlier in a more general form in the
Ostrowski's work \cite{MR3069389}.
In other words, every nonnegative integer
can be uniquely represented
as a sum of nonconsecutive distinct Fibonacci numbers.
For example, we have $20=13+5+2=F_5+F_3+F_1=\valF(\texttt{101010})$.
The unique word
$w\in \Sigma^*\setminus(\Sigma^*\1\1\Sigma^*\cup\0\Sigma^*)$ such that $n=\valF(w)$
is denoted by $\repF(n)$; see Table~\ref{table:repF}. 
We refer to this unique word $\repF(n)$ as to the Fibonacci representation of $n$.
Note that the empty word is the Fibonacci representation of the integer 0 and it is denoted
by $\varepsilon$.

\begin{table}[h]
\begin{center}
    {\scriptsize
\begin{tabular}{|c|r|}
$n$ & $\repF(n)$ \\ \hline
&\\[-3mm]
0  & $\varepsilon$   \\
1  & \texttt{1}      \\
2  & \texttt{10}     \\
3  & \texttt{100}    \\
4  & \texttt{101}    \\
5  & \texttt{1000}   \\
6  & \texttt{1001}   \\
7  & \texttt{1010}   \\
8  & \texttt{10000}  \\
9  & \texttt{10001}  \\
\end{tabular}
\quad
\begin{tabular}{|c|r|}
	$n$ & $\repF(n)$ \\ \hline
	&\\[-3mm]
	10 & \texttt{10010} \\
	11 & \texttt{10100} \\
	12 & \texttt{10101} \\
	13 & \texttt{100000} \\
	14 & \texttt{100001} \\
	15 & \texttt{100010} \\
	16 & \texttt{100100} \\
	17 & \texttt{100101} \\
	18 & \texttt{101000} \\
	19 & \texttt{101001} \\
\end{tabular}
\quad
\begin{tabular}{|c|r|}
	$n$ & $\repF(n)$ \\ \hline
	&\\[-3mm]
	20 & \texttt{101010} \\
	21 & \texttt{1000000} \\
	22 & \texttt{1000001} \\
	23 & \texttt{1000010} \\
	24 & \texttt{1000100} \\
	25 & \texttt{1000101} \\
	26 & \texttt{1001000} \\
	27 & \texttt{1001001} \\
	28 & \texttt{1001010} \\
	29 & \texttt{1010000} \\
\end{tabular}}
\end{center}
    \caption{The Fibonacci numeration system $\Fcal$.}
    \label{table:repF}
\end{table}

One way to extend the Fibonacci numeration system 
to all integers is to use Fibonacci numbers with negative indices.
In \cite{MR1162409}, it was proved that
every integer~$n$ whether positive, negative, or zero,
can be uniquely written 
as a sum of nonconsecutive
distinct Fibonacci numbers $F_i$ where $i< -2$.
Let us stress that in this work, we use $F_0=1$ and $F_1=2$, so that
$F_{-1}=1$,
$F_{-2}=0$,
$F_{-3}=1$,
$F_{-4}=-1$,
$F_{-5}=2$,
$F_{-6}=-3$ and $F_{-i}=(-1)^{i+1}F_{i-4}$.
This system is called the negaFibonacci number system in 
\cite[\S 7.1.3]{MR3444818}, where it is used to navigate efficiently
in a pentagrid within the hyperbolic plane.

The addition of integers (including negative integers) based on Fibonacci
numbers was considered in \cite{MR3093678},
where a bit sign is appended to the 
Fibonacci representations. 
The first bit indicates the sign; more precisely, $\0$ means nonnegative, 
$\1$ indicates nonpositive
and the integer zero has two representations.
Problems arise when adding integers with opposite signs because
the sign of the result depends on which number has the greater magnitude.

\subsection*{A Fibonacci analogue of the two's complement}
An analog of the two's complement numeration system
can be obtained by using Fibonacci numbers instead of powers of 2
in Equation~\eqref{eq:twos-complement-value}.
This numeration system can be defined by
the map $\valFc:\{\0,\1\}^*\to\Z$,
\begin{equation}\label{eq:val-Fcal}
    \valFc(w) 
    =  - w_{k-1}F_{k} + \sum_{i=0}^{k-1}w_i F_{i},
\end{equation}
where $w=w_{k-1}\cdots w_0\in\{\0,\1\}^{k}$.
Note that it simplifies to
$\valFc(w)=-w_{k-1}F_{k-2}+\sum_{i=0}^{k-2}w_i F_{i}$.
Denoting $\Sigma = \{\0,\1\}$,
we show that for every $n\in\Z$ there exists a unique odd-length word 
$w\in\Sigma(\Sigma\Sigma)^*\setminus
\left( \Sigma^*\1\1\Sigma^* \cup \0\0\0\Sigma^* \cup \1\0\1\Sigma^*\right)$
such that $n=\valFc(w)$.
This unique word is denoted by
$\repFc(n)$; see Proposition~\ref{prop:Frep}.
The Fibonacci complement numeration system $\Fcal c$
is defined by the map $\repFc:\Z\to\{\0,\1\}^*$ whose values for small integers
are shown in Table~\ref{table:repFc}.
Note that the numeration system $\Fcal c$ extends naturally to $\Z^d$ with an appropriate
padding of shorter words; see Definition~\ref{def:num-sys-Z2}.

\begin{table}[h]
\begin{center}
    \scriptsize
    \begin{tabular}{|c|r|}
        $n$ & $\repFc(n)$ \\ \hline
        &\\[-3mm]
        $-10$ & \texttt{1000100}   \\
        $-9 $ & \texttt{1000101}    \\
        $-8 $ & \texttt{1001000}    \\
        $-7 $ & \texttt{1001001}    \\
        $-6 $ & \texttt{1001010}    \\
        $-5 $ & \texttt{10000}      \\
        $-4 $ & \texttt{10001}      \\
        $-3 $ & \texttt{10010}      \\
        $-2 $ & \texttt{100}        \\
        $-1 $ & \texttt{1}        \\
    \end{tabular}
\quad
    \begin{tabular}{|c|r|}
        $n$ & $\repFc(n)$ \\ \hline
        &\\[-3mm]
        $0$  & \texttt{0}\\
        $1$  & \texttt{001}     \\
        $2$  & \texttt{010}     \\
        $3$  & \texttt{00100}   \\
        $4$  & \texttt{00101}   \\
        $5$  & \texttt{01000}   \\
        $6$  & \texttt{01001}   \\
        $7$  & \texttt{01010}   \\
        $8$  & \texttt{0010000} \\
        $9$  & \texttt{0010001} \\
     \end{tabular}
     \quad
    \begin{tabular}{|c|r|}
        $n$ & $\repFc(n)$ \\ \hline
        &\\[-3mm]
        $10$ & \texttt{0010010} \\
        $11$ & \texttt{0010100} \\
        $12$ & \texttt{0010101} \\
        $13$ & \texttt{0100000} \\
        $14$ & \texttt{0100001} \\
        $15$ & \texttt{0100010} \\
        $16$ & \texttt{0100100} \\
        $17$ & \texttt{0100101} \\
        $18$ & \texttt{0101000} \\
        $19$ & \texttt{0101001} \\
    \end{tabular}
\end{center}
    \caption{The Fibonacci complement numeration system $\Fcal c$.}
    \label{table:repFc}
\end{table}

The numeration system $\Fcal c$ was discovered while studying
a particular aperiodic set of 16 Wang tiles \cite{MR4364231}.
A valid tiling of the plane by the Wang tiles is produced by a specific automaton 
related to the numeration system $\Fcal c$.
More precisely, this automaton takes the
Fibonacci complement representation of a position $(n,m)\in\Z^2$ as an input and
it returns the tile to place at position $(n,m)$ as an output.
The numeration system $\Fcal c$ appears naturally in this context because
it is related to the Fibonacci substitution \cite{2302.14481}, 
which follows from Theorem~\ref{thm:repF_is_increasing} proved in
Section~\ref{sec:increasing-bijection}. 

\subsection*{Addition in the Fibonacci numeration system $\Fcal$}
Let us recall that, following \cite{zbMATH03947643}, the addition of Fibonacci
representations can be performed in terms of a Mealy machine.
A~\emph{Mealy machine} $M$ is a labeled directed graph whose vertices are called
\emph{states} and edges are called \emph{transitions} \cite[Appendix A]{zbMATH05970650}.
The transitions are labeled by pairs $a/b$ of letters.
The first letter $a\in A$ is the input symbol and the second letter $b\in B$ is the output symbol.
The empty word $\varepsilon$ is sometimes included in $B$.
For every state $s$ and every letter $a$, there is at most one transition starting
from the state $s$ with the input symbol $a$.
One distinguished state is called the \emph{initial state}.

A machine $M$ computes a function $M:A^*\to B^*$.
Let $x=x_0x_1\cdots x_{k-1}\in A^*$ and $y=y_0y_1\cdots y_{k-1}\in B^*$ be two words of length $k\in\N$ over the input and output alphabets.
The word $y$ is the output of $x$ under the machine $M$
if and only if there is a sequence $\{s_i\}_{0\leq i\leq k}$
of $k+1$ states of $M$ such that
$s_0$ is the initial state and
for every $i$ with $0\leq i< k$,
there is a transition from $s_{i}$ to $s_{i+1}$ labeled by $x_i/y_i$.
The output word $y$ is denoted by $M(x)$.
The last state $s_k$ is denoted by $M_{\last}(x)$
and an extra output word depending on the last state is denoted by $M_{\downarrow}(x)$.

A 10-state Mealy machine $\Bcal$,
called \emph{the adder} by Berstel
\cite[p.~22]{zbMATH03947643},
reading from left to right is shown in
Figure~\ref{fig:berstel_transducer}.
The states are pairs made of a binary word of length 3
and an additional integer whose meaning is explained in Section~\ref{sec:proof-Berstel-adder}.
The Berstel adder $\Bcal$ has the property that
for every input $u \in \Aldel^*$, $u$ has the same value in the Fibonacci
numeration system as the concatenation of the output word $\Bcal(u) \in
\Alsig^*$ with the three-letter binary word $\Bcal_{\downarrow}(u)\in\{\0\0\0,
\0\0\1, \0\1\0, \1\0\0, \1\0\1\}$, which depends only on the final state reached when
reading~$u$:
\[
\valF(u) = \valF(\Bcal(u)\cdot\Bcal_{\downarrow}(u)).
\] 
This result is stated as our Theorem~\ref{thm:Berstel} below.
Computations performed with the Berstel adder are shown in
Example~\ref{example:berstel_addition}
and Example~\ref{example:berstel_addition-2}.

\begin{figure}[h!]
\begin{center}
    \includegraphics[width=\linewidth]{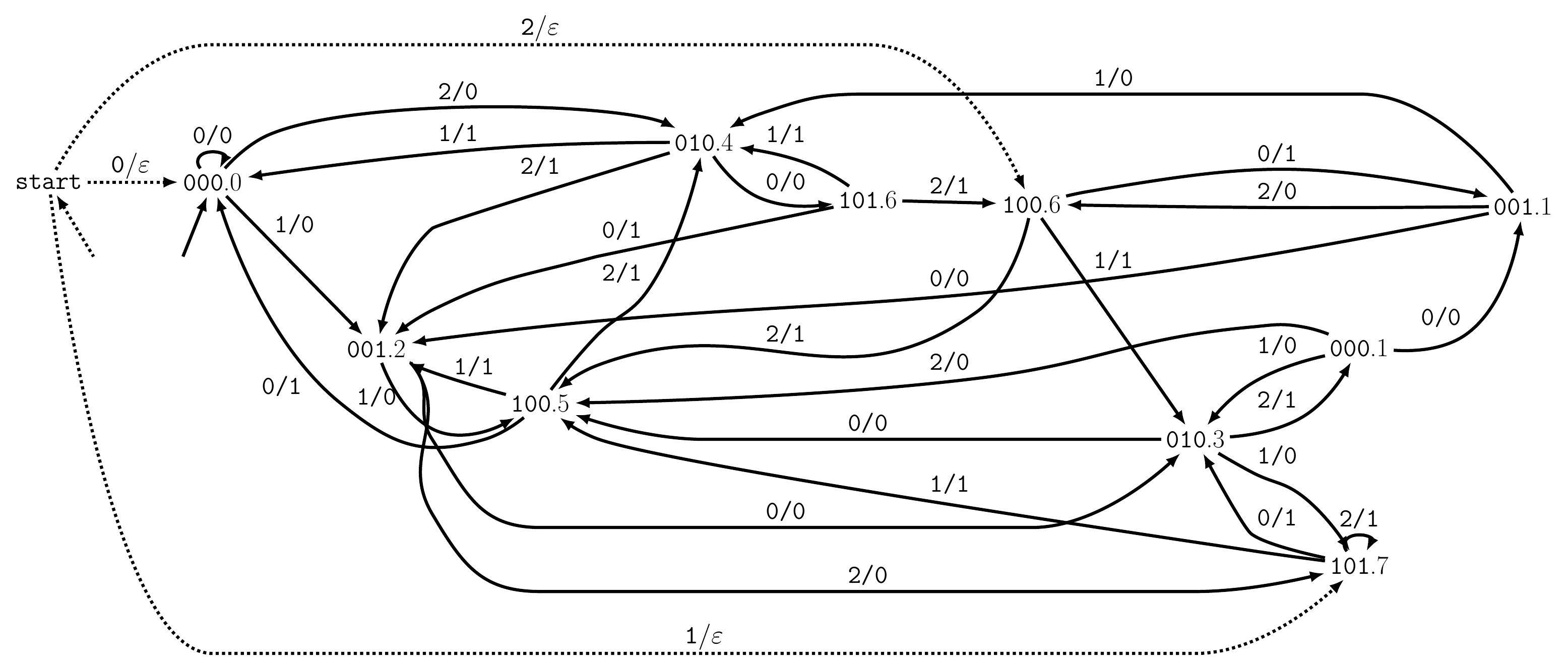}
\end{center}
    \caption{The Berstel adder $\Bcal$ is a 10-state Mealy machine
    with 30 transitions illustrated as solid edges with initial state $\0\0\0.0$.
    The Mealy machine $\Tcal$ is obtained by adding a new state
    \texttt{start} that replaces $\0\0\0.0$ as initial state and adding
    three additional transitions shown with dashed edges.}
\label{fig:berstel_transducer}
\end{figure}

\subsection*{Main result}
The goal of this contribution is to prove that 
the Fibonacci complement numeration system $\Fcal c$ 
behaves like the two's complement in terms of addition of integers.
More precisely, addition of integers in the numeration system $\Fcal c$ 
can be performed as the usual addition of nonnegative integers in the
Fibonacci numeration system $\Fcal$.

The result is based on the Berstel adder $\Bcal$ to which three transitions 
\[
    \texttt{start}\xrightarrow[]{0/\varepsilon}\texttt{000}.0, \quad
    \texttt{start}\xrightarrow[]{1/\varepsilon}\texttt{101}.7, \quad
    \texttt{start}\xrightarrow[]{2/\varepsilon}\texttt{100}.6
\]
are added from a new initial state $\texttt{start}$ that replaces 
$\texttt{000}.0$
as initial state.
The three additional transitions are shown with dashed edges in
Figure~\ref{fig:berstel_transducer}. We let $\Tcal$ denote this modified Berstel
adder consisting of 11 states and 33 transitions. 
The finite-state transducer $\Tcal$ is a Mealy machine if the empty
word $\varepsilon$ is included in the output alphabet.

\begin{maintheorem}\label{theorem:main}
    The Mealy machine $\Tcal$ has the property that
    for every nonempty input $u \in \Aldel^k$, it outputs a binary word 
    $\Tcal(u)\cdot \Tcal_{\downarrow}(u) \in \Alsig^+$ 
    of length $k+2$
    with the same value for the Fibonacci complement numeration system, i.e.,
    \[
    \valFc(u) = \valFc(\Tcal(u)\cdot \Tcal_{\downarrow}(u)).
    \]
\end{maintheorem}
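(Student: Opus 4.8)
The plan is to reduce the statement to Berstel's theorem (Theorem~\ref{thm:Berstel}) by isolating the effect of the single special transition out of \texttt{start}. Write a nonempty input as $u=u_{k-1}\cdots u_0\in\Aldel^k$ with leading digit $a:=u_{k-1}$ and tail $v:=u_{k-2}\cdots u_0\in\Aldel^m$ where $m=k-1$. Reading $a$ from \texttt{start} emits $\varepsilon$ and moves $\Tcal$ to one of the three landing states $\sigma(\0)=\texttt{000}.0$, $\sigma(\1)=\texttt{101}.7$, $\sigma(\2)=\texttt{100}.6$; all remaining transitions of $\Tcal$ are those of $\Bcal$, so $\Tcal$ then behaves exactly as $\Bcal$ started at $\sigma(a)$ on the word $v$. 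Hence $\Tcal(u)\cdot\Tcal_{\downarrow}(u)$ is the word $o$ of length $m+3$ obtained by running $\Bcal$ from state $\sigma(a)$ on $v$ (the $m$ one‑letter outputs followed by the three flushing letters), and in particular $|\Tcal(u)\cdot\Tcal_{\downarrow}(u)|=m+3=k+2$ as required. First I would record the elementary identity $\valFc(w)=\valF(w)-w_{\ell-1}F_\ell$ valid for every $w\in\Aldel^\ell$; applied to $u$ and simplified with $F_k-F_{k-1}=F_{k-2}$ it gives $\valFc(u)=\valF(v)-aF_{m-1}$. So the theorem reduces to showing $\valFc(o)=\valF(v)-aF_{m-1}$.

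The key step is a lemma generalising Theorem~\ref{thm:Berstel} to runs of $\Bcal$ started at an arbitrary state. For a state $s$ of $\Bcal$ and $v\in\Aldel^m$, let $o_s(v)$ be the length‑$(m+3)$ output of $\Bcal$ started at $s$ on $v$. I claim that for each state $s$ there is an integer $C_s(m)$, read off from the state label, with $C_{\texttt{000}.0}\equiv 0$, $C_{\texttt{101}.7}(m)=F_{m+3}-F_{m-1}$ and $C_{\texttt{100}.6}(m)=F_{m+3}-2F_{m-1}$, such that $\valF(o_s(v))=\valF(v)+C_s(m)$. I would prove this by induction on $m$: the base case $m=0$ is the equality $C_s(0)=\valF(\text{flush word of }s)$ (e.g. $\valF(\texttt{101})=F_2+F_0=4=F_3-F_{-1}$, $\valF(\texttt{100})=F_2=3=F_3-2F_{-1}$), and the inductive step rests on the local identity $C_s(m)=bF_{m+2}+C_{s'}(m-1)-aF_{m-1}$ attached to each transition $s\xrightarrow{a/b}s'$, checked over all $30$ transitions of $\Bcal$; the case $s=\texttt{000}.0$ is exactly Theorem~\ref{thm:Berstel}. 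In parallel I would verify the finite ``leading letter'' facts: every transition leaving $\texttt{000}.0$ outputs $\0$ and the flush word of $\texttt{000}.0$ begins with $\0$, while every transition leaving $\texttt{101}.7$ or $\texttt{100}.6$ outputs $\1$ and the flush words of $\texttt{101}.7$ and $\texttt{100}.6$ begin with $\1$. Consequently the leading letter of $o=o_{\sigma(a)}(v)$ is $\0$ when $a=\0$ and $\1$ when $a\in\{\1,\2\}$.

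With these facts the conclusion is immediate. If $a=\0$ then $\sigma(a)=\texttt{000}.0$, the leading letter of $o$ is $\0$, so $\valFc(o)=\valF(o)=\valF(v)+C_{\texttt{000}.0}(m)=\valF(v)=\valFc(u)$. If $a\in\{\1,\2\}$ then the leading letter of $o$ is $\1$ and $o$ has length $m+3$, so
\[
\valFc(o)=\valF(o)-F_{m+3}=\valF(v)+C_{\sigma(a)}(m)-F_{m+3}=\valF(v)-aF_{m-1}=\valFc(u),
\]
using $C_{\texttt{101}.7}(m)-F_{m+3}=-F_{m-1}$ and $C_{\texttt{100}.6}(m)-F_{m+3}=-2F_{m-1}$. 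All Fibonacci identities used hold for every integer index via the two‑term recurrence, so no small‑$k$ exceptions arise (the case $k=1$, i.e. $m=0$, is already covered: $\valFc(\texttt{a})=-a$ and $C_{\sigma(a)}(0)-[a\ge1]F_3$ matches).

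I expect the main obstacle to be the key lemma: pinning down the carry values $C_s(m)$ for all ten states of $\Bcal$ and verifying the transition identity across the $30$ transitions, which is essentially a reorganisation of the proof of Theorem~\ref{thm:Berstel} with the initial‑state normalisation $C_{\texttt{000}.0}\equiv0$ relaxed to arbitrary start states. The ``leading letter'' checks are a short finite inspection, and the final arithmetic is routine. (If the landing states $\texttt{101}.7$, $\texttt{100}.6$ happen to be reachable from $\texttt{000}.0$ in $\Bcal$, one could instead prepend a fixed word to $v$ and cite Theorem~\ref{thm:Berstel} directly; the inductive formulation above avoids relying on that.)
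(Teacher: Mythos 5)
Your overall strategy is sound and your arithmetic checks out (in particular the identity $\valFc(av)=\valF(v)-aF_{m-1}$ and the final case analysis are exactly right), but you reach the conclusion by a different route than the paper. The paper never introduces per-state carry constants: it uses the neutral-prefix identity $\valFc(a\0av)=\valFc(av)$ (Lemma~\ref{lem:neutral_prefix_2}) to replace the input $\1v$ by $\1\0\1v$ and $\2v$ by $\2\0\2v$, applies Theorem~\ref{thm:Berstel} to these prefixed words, and then observes (Lemma~\ref{lem:relation_T_B}) that $\Bcal$ reads the prefixes $\1\0\1$ and $\2\0\2$ into precisely your landing states $\texttt{101}.7$ and $\texttt{100}.6$ while outputting $\0\0\0$ and $\0\0\1$ respectively — i.e., it is exactly your parenthetical fallback, and the reachability you were unsure about does hold. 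That route buys a complete proof in a few lines of telescoping equalities plus the leading-letter lemma (your Lemma~\ref{lem:properties_of_T} analogue). Your main line instead generalises Theorem~\ref{thm:Berstel} to arbitrary start states via constants $C_s(m)$; this is a true statement (your three stated values are consistent: $F_{m+3}-F_{m-1}=F_{m+2}+F_m$ matches the value forced by $\Bcal(\1\0\1)=\0\0\0$, and likewise for $\texttt{100}.6$), and it is arguably more informative, but as written it is incomplete: you supply only $3$ of the $10$ carry formulas and defer the verification of the transition identity over all $30$ edges, which is the entire content of the lemma. If you keep the inductive formulation you must either tabulate all ten $C_s(m)$ and do the $30$ checks, or — more economically — derive well-definedness of $C_s(m)$ for each reachable state $s$ directly from Theorem~\ref{thm:Berstel} by fixing one access word $p$ with output $q=\Bcal(p)$ and computing $C_s(m)=\sum_i p_iF_{i+m}-\sum_i q_iF_{i+m+3}$, at which point you have rediscovered the paper's argument.
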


Computations illustrating Theorem~\ref{theorem:main}
are shown in Example~\ref{example:Tcal_addition}
and Example~\ref{example:Tcal_addition-2}.

This result suggests a possible generalization to other numeration systems,
starting with
simple Parry numeration systems which include the Fibonacci numeration system; see 
\cite{MR97374,
MR166332,
MR2766740,
MR3933318}.
This remains to be explored.

\textbf{Structure of the article.}
In Section~\ref{sec:preliminaries}, we recall a few preliminary results
on the numeration system $\Fcal c$ for $\Z$.
In Section~\ref{sec:addtion-Zeck-N}, we recall the addition of Fibonacci
representations on $\N$. 
In Section~\ref{sec:addtion-Fcal-Z},
we prove Theorem~\ref{theorem:main}.
In Section~\ref{sec:increasing-bijection},
we show that the map $\repFc$ defining the Fibonacci complement numeration
system is characterized by the fact of being an increasing bijection
for some total order defined on a particular language;
see Theorem~\ref{thm:repF_is_increasing}.
In Section~\ref{sec:proof-Berstel-adder},
we provide a constructive proof that
the Berstel adder $\Bcal$ performs addition of nonnegative integers.

\subsection*{Acknowledgments}
The first author is partially funded by France's
ANR CODYS (ANR-18-CE40-0007)
and ANR IZES (ANR-22-CE40-0011).
The second author acknowledges support from Barrande Fellowship Programme 
and from Grant Agency of Czech Technical University
in Prague, through the project SGS20/183/OHK4/3T/14.
We are thankful to Christiane Frougny and Jacques Sakarovitch
for helpful discussions on numeration systems.
We thank Julien Cassaigne to make us realize that Equations
\eqref{eq:twos-complement-value} and \eqref{eq:val-Fcal} are really analogous
before simplification (this is not the case after simplification).
We thank Jeffrey Shallit for making us aware of \cite{MR3518158}, as well as
earlier references \cite{MR3069389,MR58626} about Zeckendorf's theorem.
We would like to thank the referees for their in-depth reading of the article
and their many valuable comments.

\section{A Fibonacci Complement Numeration System for $\Z$}\label{sec:preliminaries}

In this section, we present the Fibonacci complement numeration system,
which is
defined by the value map $\valFc:\Sigma^*\to\Z$ 
given in Equation~\eqref{eq:val-Fcal},
where $\Sigma=\Alsig$.

The first observation to make on this value map is given in the next lemma.

\begin{lemma}\label{lem:neutral_prefix}
    For every word $w\in\Sigma^{*}$,
    we have
    \[
        \valFc(\0\0\0w)=\valFc(\0w)
        \quad
        \text{ and }
        \quad
        \valFc(\1\0\1w)=\valFc(\1w).
    \]
\end{lemma}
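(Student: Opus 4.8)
The plan is to prove both identities by a direct computation from the defining value map in Equation~\eqref{eq:val-Fcal}, using nothing more than the Fibonacci recurrence $F_{n+1}=F_n+F_{n-1}$.

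First I would fix $w=w_{k-1}\cdots w_0\in\Sigma^{k}$ and evaluate the padded words one at a time. The word $\0w$ has length $k+1$ and leading digit $\0$, so the defining sum collapses to $\valFc(\0w)=\sum_{i=0}^{k-1}w_iF_i$; similarly $\0\0\0w$ has length $k+3$ and leading digit $\0$, hence $\valFc(\0\0\0w)=\sum_{i=0}^{k-1}w_iF_i$ as well. Comparing the two expressions already gives the first identity, with no arithmetic at all.

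For the second identity, the word $\1w$ has length $k+1$ with its leading digit $\1$ in position $k$, so
\[
\valFc(\1w)=\sum_{i=0}^{k-1}w_iF_i + F_k - F_{k+1} = \sum_{i=0}^{k-1}w_iF_i - F_{k-1},
\]
using $F_{k+1}=F_k+F_{k-1}$. On the other hand, $\1\0\1w$ has length $k+3$, with digit $\1$ in positions $k+2$ and $k$ and digit $\0$ in position $k+1$, so
\[
\valFc(\1\0\1w)=\sum_{i=0}^{k-1}w_iF_i + F_k + F_{k+2} - F_{k+3}.
\]
Applying the recurrence once more, $F_{k+3}=F_{k+2}+F_{k+1}$, the last two terms reduce to $-F_{k+1}$, so the right-hand side equals $\sum_{i=0}^{k-1}w_iF_i + F_k - F_{k+1}=\valFc(\1w)$, as desired.

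There is essentially no obstacle here: the statement is an elementary telescoping of the Fibonacci recurrence. The only points needing a little care are the bookkeeping of digit positions in the longer words and the degenerate case $w=\varepsilon$ (that is, $k=0$); since the unsimplified form of $\valFc$ uses only $F_i$ with $i\ge 0$, no convention on negative-index Fibonacci numbers is required, and a direct check gives $\valFc(\0\0\0)=0=\valFc(\0)$ and $\valFc(\1\0\1)=F_0+F_2-F_3=1+3-5=-1=\valFc(\1)$.
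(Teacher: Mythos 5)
Your proof is correct and follows essentially the same route as the paper's: the first identity is immediate from the definition, and the second is a direct telescoping of the Fibonacci recurrence applied to $F_k+F_{k+2}-F_{k+3}$. The only difference is that you spell out the intermediate step and the $k=0$ case explicitly, which the paper compresses.
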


\begin{proof}
    Let $w=w_{k-1} \cdots w_0\in\Sigma^{*}$.
    We have
    $\valFc(\0\0\0w) 
        = \sum_{i=0}^{k-1}w_i F_i
        = \valFc(\0w)$.
    Also,
    \[
       \valFc(\1\0\1w) 
        = - F_{k+1}  + F_{k} + \sum_{i=0}^{k-1}w_i F_i
        = - F_{k-1} + \sum_{i=0}^{k-1}w_i F_i
        = \valFc(\1w).\qedhere
    \]
\end{proof}

Thus $\0\0$ or $\1\0$ can be used to pad words without changing their value,
leading to the following definition.

\begin{definition}[Neutral prefix]\label{def:neutral-prefix}
Let $w\in\Sigma^{*}$. 
    We say that $\0\0$ (resp., $\1\0$) 
    is the \emph{neutral prefix} of $w$ if 
    $w\in\0\Sigma^*$ (resp., if $w\in\1\Sigma^*$).
    The neutral prefix is denoted by $p_w$.
\end{definition}

Representations in the numeration system $\Fcal c$ of integers $n\in [-13,21)$
are shown in Figure~\ref{fig:numeration-system-F}. The neutral prefixes
correspond to two loops on the vertices $0$ and $-1$.
\begin{figure}[h]
\begin{center}
    \begin{tikzpicture}[>=latex]
    \begin{scope}[xscale=1.01,yscale=.42]
        \node (S)  [draw,circle] at (0,-8) {\texttt{start}};
    \node (0)  [draw,circle] at (2,1) {0};
    \node (-1) [draw,circle,inner sep=1pt] at (2,-8) {$-1$};
    \node (1)  [draw,circle] at (4,1) {1};
    \node (2)  [draw,circle] at (4,6) {2};
    \node (-2) [draw,circle,inner sep=1pt] at (4,-8) {$-2$};
    \node (3)  [draw,circle] at (6,1)  {3};
    \node (4)  [draw,circle] at (6,4)  {4};
    \node (5)  [draw,circle] at (6,6)  {5};
    \node (6)  [draw,circle] at (6,9)  {6};
    \node (7)  [draw,circle] at (6,11) {7};
    \node (-3) [draw,circle,inner sep=1pt] at (6,-3) {$-3$};
    \node (-4) [draw,circle,inner sep=1pt] at (6,-5) {$-4$};
    \node (-5) [draw,circle,inner sep=1pt] at (6,-8) {$-5$};
    \node (8)    at (8,1)  {\scriptsize 8};
    \node (9)    at (8,2)  {\scriptsize 9};
    \node (10)   at (8,3)  {\scriptsize 10};
    \node (11)   at (8,4)  {\scriptsize 11};
    \node (12)   at (8,5)  {\scriptsize 12};
    \node (13)   at (8,6)  {\scriptsize 13};
    \node (14)   at (8,7)  {\scriptsize 14};
    \node (15)   at (8,8)  {\scriptsize 15};
    \node (16)   at (8,9)  {\scriptsize 16};
    \node (17)   at (8,10) {\scriptsize 17};
    \node (18)   at (8,11) {\scriptsize 18};
    \node (19)   at (8,12) {\scriptsize 19};
    \node (20)   at (8,13) {\scriptsize 20};
    \node (-6)   at (8,-1) {\scriptsize $-6$};
    \node (-7)   at (8,-2) {\scriptsize $-7$};
    \node (-8)   at (8,-3) {\scriptsize $-8$};
    \node (-9)   at (8,-4) {\scriptsize $-9$};
    \node (-10)  at (8,-5) {\scriptsize $-10$};
    \node (-11)  at (8,-6) {\scriptsize $-11$};
    \node (-12)  at (8,-7) {\scriptsize $-12$};
    \node (-13)  at (8,-8) {\scriptsize $-13$};
    \end{scope}
    \draw[loop above,->] (0) to node{\tt 00} (0);
    \draw[loop above,->] (-1) to node{\tt 01} (-1);
    \draw[<-] (S)  -- ++ (-1,0);
    \draw[->] (S)  -- node[above]{\tt 0} (0);
    \draw[->] (S)  -- node[above]{\tt 1} (-1);
    \draw[->] (0)  -- node[above]{\tt 10} (2);
    \draw[->] (0)  -- node[above]{\tt 01} (1);
    \draw[->] (-1) -- node[above]{\tt 00} (-2);
    \draw[->] (2)  -- node[fill=white,inner sep=1pt]{\tt 10} (7);
    \draw[->] (2)  -- node[fill=white,inner sep=1pt]{\tt 01} (6);
    \draw[->] (2)  -- node[fill=white,inner sep=1pt]{\tt 00} (5);
    \draw[->] (1)  -- node[fill=white,inner sep=1pt]{\tt 01} (4);
    \draw[->] (1)  -- node[fill=white,inner sep=1pt]{\tt 00} (3);
    \draw[->] (-2) -- node[fill=white,inner sep=1pt]{\tt 10} (-3);
    \draw[->] (-2) -- node[fill=white,inner sep=1pt]{\tt 01} (-4);
    \draw[->] (-2) -- node[fill=white,inner sep=1pt]{\tt 00} (-5);
    \draw[->] (7)  -- node[fill=white,near end,inner sep=1pt]{\tiny\tt 10} (20);
    \draw[->] (7)  -- node[fill=white,near end,inner sep=1pt]{\tiny\tt 01} (19);
    \draw[->] (7)  -- node[fill=white,near end,inner sep=1pt]{\tiny\tt 00} (18);
    \draw[->] (6)  -- node[fill=white,near end,inner sep=1pt]{\tiny\tt 01} (17);
    \draw[->] (6)  -- node[fill=white,near end,inner sep=1pt]{\tiny\tt 00} (16);
    \draw[->] (5)  -- node[fill=white,near end,inner sep=1pt]{\tiny\tt 10} (15);
    \draw[->] (5)  -- node[fill=white,near end,inner sep=1pt]{\tiny\tt 01} (14);
    \draw[->] (5)  -- node[fill=white,near end,inner sep=1pt]{\tiny\tt 00} (13);
    \draw[->] (4)  -- node[fill=white,near end,inner sep=1pt]{\tiny\tt 01} (12);
    \draw[->] (4)  -- node[fill=white,near end,inner sep=1pt]{\tiny\tt 00} (11);
    \draw[->] (3)  -- node[fill=white,near end,inner sep=1pt]{\tiny\tt 10} (10);
    \draw[->] (3)  -- node[fill=white,near end,inner sep=1pt]{\tiny\tt 01}  (9);
    \draw[->] (3)  -- node[fill=white,near end,inner sep=1pt]{\tiny\tt 00}  (8);
    \draw[->] (-3) -- node[fill=white,near end,inner sep=1pt]{\tiny\tt 10} (-6);
    \draw[->] (-3) -- node[fill=white,near end,inner sep=1pt]{\tiny\tt 01} (-7);
    \draw[->] (-3) -- node[fill=white,near end,inner sep=1pt]{\tiny\tt 00} (-8);
    \draw[->] (-4) -- node[fill=white,near end,inner sep=1pt]{\tiny\tt 01} (-9);
    \draw[->] (-4) -- node[fill=white,near end,inner sep=1pt]{\tiny\tt 00} (-10);
    \draw[->] (-5) -- node[fill=white,near end,inner sep=1pt]{\tiny\tt 10} (-11);
    \draw[->] (-5) -- node[fill=white,near end,inner sep=1pt]{\tiny\tt 01} (-12);
    \draw[->] (-5) -- node[fill=white,near end,inner sep=1pt]{\tiny\tt 00} (-13);
    \end{tikzpicture}
\end{center}
\caption{
The language $\repFc(\Z)$ can be represented as a tree of integers where each 
word $\repFc(n)$ labels the path from the node \texttt{start} to the node $n\in\Z$.
The labeling of the nodes by integers illustrates a breadth-first traversal of the
upper and lower half of the tree.}
\label{fig:numeration-system-F}
\end{figure}

Note the following relation between
the Fibonacci numeration system
and the Fibonacci complement numeration system:
    \begin{equation}\label{eq:valFc-vs-valF}
    \valFc(w) 
    =  - w_{k-1}F_{k} + \sum_{i=0}^{k-1}w_i F_{i}
    =  - w_{k-1}F_{k} + \valF(w).
    \end{equation}
    for every nonempty word $w=w_{k-1}\cdots w_0 \in \Sigma^+$ of length $k$.

The following proposition enables us to define representations of integers in the
Fibonacci complement numeration system.
Its proof is in Section~\ref{sec:increasing-bijection}.

\begin{proposition}\label{prop:Frep}
    The map  
    $\valFc\colon \Sigma(\Sigma\Sigma)^*\setminus
    \left( \Sigma^*\1\1\Sigma^* \cup \0\0\0\Sigma^* \cup \1\0\1\Sigma^*\right)
    \to \Z$ is a bijection.
\end{proposition}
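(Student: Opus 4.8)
The plan is to prove the bijectivity length by length: for each odd length $2m+1$ I will show that $\valFc$ restricts to a bijection from the length-$(2m+1)$ words of the domain
$L:=\Sigma(\Sigma\Sigma)^*\setminus\left(\Sigma^*\1\1\Sigma^*\cup\0\0\0\Sigma^*\cup\1\0\1\Sigma^*\right)$
onto a finite interval of consecutive integers, and then check that as $m$ ranges over $\N$ these intervals tile $\Z$. The single external ingredient is the classical consequence of Zeckendorf's theorem already recalled in the introduction: for every $\ell\ge 0$, the map $\valF$ is a bijection from the set of length-$\ell$ binary words with no factor $\1\1$ onto $\{0,1,\dots,F_\ell-1\}$ (every such word has value strictly below $F_\ell$, the value $F_\ell-1$ being attained by $\1\0\1\0\cdots$, and the inverse is given by padding the Zeckendorf representation with leading zeros).

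Next I split a word $w\in L$ of length $2m+1$ according to its leading letter. If $w$ begins with $\0$, write $w=\0v$ with $|v|=2m$; the three forbidden patterns reduce to the conditions ``$v$ has no factor $\1\1$ and $v\notin\0\0\Sigma^*$'', while $\valFc(w)=\valF(v)$ by \eqref{eq:valFc-vs-valF}. Under the Zeckendorf bijection above, the extra condition $v\notin\0\0\Sigma^*$ translates exactly into $\valF(v)\ge F_{2m-2}$ (a length-$2m$ no-$\1\1$ word starts with $\0\0$ if and only if its value is $<F_{2m-2}$), so $\valFc$ restricts to a bijection from these words onto $[F_{2m-2},F_{2m})$; this covers $m=0$ as well, since $[F_{-2},F_0)=\{0\}$. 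If $w$ begins with $\1$ and $m\ge 1$, the conditions force $w=\1\0\0u$ with $u$ of length $2m-2$ and no factor $\1\1$, and \eqref{eq:valFc-vs-valF} together with $F_{2m}-F_{2m+1}=-F_{2m-1}$ yields $\valFc(w)=\valF(u)-F_{2m-1}$; hence $\valFc$ restricts to a bijection from these words onto $[-F_{2m-1},-F_{2m-3})$, using $F_{2m-1}-F_{2m-2}=F_{2m-3}$. The only word not matching one of these two patterns is the length-$1$ word $\1$, for which $\valFc(\1)=F_0-F_1=-1$.

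Finally I assemble the pieces. The intervals $[F_{2m-2},F_{2m})$ for $m\ge 0$ are pairwise disjoint, consecutive, and their union is $\Z_{\ge 0}$; the intervals $[-F_{2m-1},-F_{2m-3})$ for $m\ge 1$, together with the singleton $\{-1\}$, are pairwise disjoint, consecutive, and their union is $\Z_{<0}$. Since $\valFc$ is injective on each length class and the images of the distinct length classes partition $\Z$, the map $\valFc\colon L\to\Z$ is a bijection. I expect the main obstacle to be purely bookkeeping: carrying out cleanly the case analysis that turns the three prefix/factor constraints into the two normal forms ``$\0v$ with $v$ admissible'' and ``$\1\0\0u$ with $u$ admissible'', watching the small-$m$ degeneracies (in particular that the length-$1$ word $\1$ falls outside the generic $\1\0\0u$ family and must be treated by hand), and verifying the Fibonacci-index arithmetic with the convention $F_0=1$, $F_1=2$ so that consecutive interval endpoints indeed coincide.
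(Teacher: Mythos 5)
Your proposal is correct and follows essentially the same route as the paper's proof: both arguments partition the domain by leading digit and (odd) length, reduce each class to the fixed-length Zeckendorf bijection via Equation~\eqref{eq:valFc-vs-valF} and the normal forms $\0v$, $\1\0\0u$, $\1$, and conclude by observing that the resulting intervals $[F_{2m-2},F_{2m})$, $[-F_{2m-1},-F_{2m-3})$ and $\{-1\}$ tile $\Z$ (this is exactly the content of Lemma~\ref{lem:excercise_on_Fibo} and the displayed decomposition of $\Z$ in the paper's proof). The only organizational difference is that you establish bijectivity class by class, whereas the paper proves injectivity and surjectivity globally; the substance is identical.
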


The inverse of the map $\valFc$ defines the Fibonacci complement numeration system.

\begin{definition}[Fibonacci complement numeration system]\label{def:num-sys-Z}
    For every $n\in\Z$, we let $\repFc(n)$ denote the unique word
    $w\in \Sigma(\Sigma\Sigma)^*\setminus
    \left( \Sigma^*\1\1\Sigma^* \cup \0\0\0\Sigma^* \cup \1\0\1\Sigma^*\right)$
    such that $\valFc(w)=n$.
\end{definition}

The Fibonacci complement $(\Fcal c)$ numeration system is illustrated in
Figure~\ref{fig:numeration-system-F}.
Note that this numeration system was denoted by the letter $\Fcal$
in \cite{MR4364231}.
Herein, we made the choice of denoting the Fibonacci complement numeration system by
$\Fcal c$ and the Fibonacci numeration system by~$\Fcal$.

The neutral prefix can be used to pad words so that they all have the same length.

\begin{definition}[Pad function]
    Let $u,v\in\Sigma(\Sigma\Sigma)^*$ be two words of odd length.
    We define
    \[
        \pad\begin{pmatrix} u \\ v \end{pmatrix}=\left(
        \begin{array}{c}
            \pad_k(u)\\
            \pad_k(v)
        \end{array}
        \right)
    \]
    where $k=\max\{|u|, |v|\}$
    and 
    $\pad_k(w)= {p_{w}}^{\frac{1}{2}(k-|w|)}w$
    for every $w\in\{u,v\}$
    where $p_w$ is the neutral prefix of the word $w$.
    Note that $k-|w|$ is always even since $u$ and $v$ are of odd length.
\end{definition}

Padding allows us
to represent coordinates
in $\Z^d$ in dimension $d\geq1$. Here we consider the case $d=2$.
Padding also allows us 
to define the sum of words.

\begin{definition}[Numeration system $\Fcal c$ for $\Z^2$]\label{def:num-sys-Z2}
	Let $\bn=(n_1,n_2)\in\Z^2$.
	We define
	\[
	\repFc(\bn)=\pad\left(
	\begin{array}{c}
	\repFc(n_1)\\
	\repFc(n_2)
	\end{array}
	\right).
	\]
\end{definition}

\begin{definition}[Sum of two words]\label{def:addition-Fcal}
    Let $\Sigma = \Alsig$ and $u,v \in \Sigma^*$.
    Then we define $\sumF:\Sigma^*\times \Sigma^*\to \Aldel^*$ as
    \[
        \sumF(u,v)=(u_{k-1}+v_{k-1})\cdots (u_0+v_0)
        \quad
        \text{ where }
        \begin{pmatrix} u_{k-1}\cdots u_0 \\ v_{k-1}\cdots v_0 \end{pmatrix}
            =
        \pad\begin{pmatrix} u \\ v \end{pmatrix}.
    \]
\end{definition}

As the sum of two binary words is over the alphabet $\Aldel$,
we extend the maps $\val_\Fcal$ and $\val_{\Fcal c}$ naturally to the 
alphabet $\Aldel$. The following lemma is analogous to
Lemma~\ref{lem:neutral_prefix}.
\begin{lemma}\label{lem:neutral_prefix_2}
	For every word $v\in\Aldel^{*}$ and every $a\in\Aldel$,
	we have
	\[
	\valFc(a\0 a v)=\valFc(a v).
	\]
\end{lemma}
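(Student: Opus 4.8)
The plan is to mimic the computation in the proof of Lemma~\ref{lem:neutral_prefix} but now working over the extended alphabet $\Aldel$. Since $\valFc$ has been extended to $\Aldel^*$ by the very same formula~\eqref{eq:val-Fcal}, namely $\valFc(w)=\sum_{i=0}^{k-1}w_iF_i - w_{k-1}F_k$ for $w=w_{k-1}\cdots w_0$, the argument is purely a matter of evaluating this sum on the word $a\0av$ and comparing it to its value on $av$. The only genuinely new feature compared to Lemma~\ref{lem:neutral_prefix} is that the leading digit $a$ may now be $\2$ instead of just $\0$ or $\1$, so I cannot split into the two cases $a=\0$ and $a=\1$ as was done there; instead I keep $a$ as a formal symbol and carry it through the arithmetic.

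First I would write $v=v_{k-1}\cdots v_0\in\Aldel^k$ of length $k$, so that $a\0av$ has length $k+3$ with digits, from most to least significant, $a,\0,a,v_{k-1},\dots,v_0$. Then
\[
\valFc(a\0av)=\sum_{i=0}^{k-1}v_iF_i + aF_k + 0\cdot F_{k+1} + aF_{k+2} - aF_{k+3}.
\]
Here the first sum collects the contributions of $v$, the term $aF_k$ comes from the digit $a$ in position $k$ (the least significant of the three prepended digits), the $\0$ in position $k+1$ contributes nothing, the digit $a$ in position $k+2$ contributes $aF_{k+2}$, and finally $-aF_{k+3}$ is the ``two's-complement correction'' $-w_{k+2}F_{k+3}$ attached to the leading digit. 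Using $F_{k+3}=F_{k+2}+F_{k+1}$ the last two $a$-terms combine as $aF_{k+2}-aF_{k+3}=-aF_{k+1}$, so
\[
\valFc(a\0av)=\sum_{i=0}^{k-1}v_iF_i + aF_k - aF_{k+1} = \sum_{i=0}^{k-1}v_iF_i - aF_{k-1},
\]
using $F_{k+1}=F_k+F_{k-1}$, which is exactly $\valFc(av)$ by formula~\eqref{eq:val-Fcal} applied to the length-$(k+1)$ word $av$ (whose leading digit is $a$ in position $k$). This completes the proof.

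I do not expect a real obstacle here; the statement is an identity among Fibonacci numbers and the computation is a two-step telescoping $aF_{k+2}-aF_{k+3}=-aF_{k+1}$ followed by $aF_k-aF_{k+1}=-aF_{k-1}$, valid for any $a$ whatsoever (in particular for $a\in\Aldel$), so the extension of the alphabet causes no trouble. The only small care needed is bookkeeping of indices when $v=\varepsilon$ (i.e.\ $k=0$): then $a\0av=a\0a$, $av=a$, and the identity reads $\valFc(a\0a)=aF_0-aF_{-1}\cdot$\,? — one checks directly that $\valFc(a\0a)=aF_0+aF_2-aF_3=aF_0-aF_1=-a=\valFc(a)$ (since $F_{-1}=1$ in this convention, consistent with the general formula), so the base case is fine and in fact subsumed by the general computation provided one uses the recurrence $F_{n}=F_{n-1}+F_{n-2}$ down through the relevant indices. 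Thus I would simply present the displayed chain of equalities as the whole proof.
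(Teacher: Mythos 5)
Your proof is correct and is essentially the paper's own argument: expand $\valFc(a\0av)$ by formula~\eqref{eq:val-Fcal} and telescope twice via the Fibonacci recurrence, first $aF_{k+2}-aF_{k+3}=-aF_{k+1}$ and then $aF_k-aF_{k+1}=-aF_{k-1}$, arriving at $\valFc(av)$. The paper's proof is the same computation with the first combination step left implicit, and your remark on the $k=0$ case (using $F_{-1}=1$) is a harmless extra check.
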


\begin{proof}
	Let $v=v_{k-1} \cdots v_0\in\Aldel^{*}$ and $a\in\Aldel$.
	We have
	\[
	\valFc(a\0av) 
	=  - aF_{k+1} + aF_{k} + \sum_{i=0}^{k-1}v_i F_i
	=  - aF_{k-1} + \sum_{i=0}^{k-1}v_i F_i
	= \valFc(av).\qedhere
	\]
\end{proof}

\section{Addition of Fibonacci representations on $\N$}\label{sec:addtion-Zeck-N}

In this section, we recall how the addition of Fibonacci representations of
nonnegative integers can be performed with the Berstel adder, 
a finite-state deterministic transducer proposed by Berstel in
\cite[p.~22]{zbMATH03947643}.

We first present the formal definition of Mealy machine
as the Berstel adder is an instance of it.
A detailed introduction to automata theory and finite-state transducers 
can be found for instance in \cite{MR2567276,zbMATH05970650}.
A \emph{Mealy machine with output} is a 7-tuple $M = (S, S_0, A, B, \delta, \eta, \phi)$, where
$S$ is a finite set of states,
$S_0\in S$ is the initial state,
$A$ is the input alphabet,
$B$ is the output alphabet,
$\delta: S\times A \to S$ is
the transition function
mapping pairs of a state and an input symbol to the corresponding next state,
$\eta: S\times A \to B$ is the output function mapping pairs of a state 
and an input symbol to the corresponding output symbol
and
$\phi: S\to B^*$ is the function mapping each state to an extra output word 
over the alphabet $B$.
The empty word $\varepsilon$ is sometimes included in $B$.
Reading a word $u = u_0 \cdots u_{k-1} \in A^*$ for some $k\in\N$, 
the Mealy machine $M$
moves between states $s_i\in S$, with $s_0 = S_0$ and $s_{i+1} = \delta(s_i, u_i)$,
outputting sequentially
one letter $w_i = \eta(s_i, u_i)\in B$ for each input letter $u_i\in A$,
$i\in\{0,1,\dots,k-1\}$.
The output word $w = w_0 \cdots w_{k-1}$ is denoted by $M(u)$.
The last state $s_k$ is denoted by $M_{\last}(u)$ 
and an extra output word depending on the last state is $M_{\downarrow}(u)=\phi(M_{\last}(u))$.
In this article, the extra output word needs to be concatenated after the output word $M(u)$.

The Berstel adder is the Mealy machine $\Bcal= (Q, \0\0\0.0, \Aldel, \Alsig, \delta_\Bcal, \eta_\Bcal, \phi_\Bcal)$ with the set of 10 states 
\[
Q=\left\{
\begin{array}{l}
\mathtt{000}.0,\quad
\mathtt{001}.1,\quad
\mathtt{010}.3,\quad
\mathtt{100}.5,\quad
\mathtt{101}.6,\\
\mathtt{000}.1,\quad
\mathtt{001}.2,\quad
\mathtt{010}.4,\quad
\mathtt{100}.6,\quad
\mathtt{101}.7
\end{array}
\right\}
\]
with initial state $\0\0\0.0$, input alphabet $\Aldel$,
output alphabet $\Alsig$, transition function $\delta_\Bcal$ and output
function $\eta_\Bcal$ as shown in Figure~\ref{fig:berstel_transducer}.
The states of the Berstel adder 
is a subset of $\Scal\times\{0,1,\dots,7\}$
where $\Scal=\{\0\0\0, \0\0\1, \0\1\0, \1\0\0, \1\0\1\}$.
The function $\phi_\Bcal$ is the canonical projection
$\Scal\times\{0,1,\dots,7\}\to \Scal$.
Our representation of the 10 states (in particular the value in $\{0,1,\dots,7\}$) 
is not the same but is equivalent to the 10 states provided by Berstel.
The value in $\{0,1,\dots,7\}$ is used in Section~\ref{sec:proof-Berstel-adder}
to prove that there are finitely many states to consider up to an equivalence relation.
Reading a word $u \in \Aldel^*$,
the Berstel adder outputs $\Bcal(u)\cdot \Bcal_{\downarrow}(u)$.

Berstel introduced the adder $\Bcal$ in \cite{zbMATH03947643}
for the addition of Fibonacci representations of nonnegative integers in $\N$.

\begin{theorem}\label{thm:Berstel}
    {\rm\cite{zbMATH03947643}}
    The Berstel adder $\Bcal$ has the property that
    for every input $u \in \Aldel^*$, it outputs a word $\Bcal(u)\cdot \Bcal_{\downarrow}(u) \in \Alsig^*$ 
    with same value for the Fibonacci numeration system:
    $$
    \valF(u) = \valF(\Bcal(u)\cdot \Bcal_{\downarrow}(u)).
    $$
\end{theorem}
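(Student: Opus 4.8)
The plan is to prove Theorem~\ref{thm:Berstel} by induction on the length of the input word $u$, using the state labels of $\Bcal$ to encode exactly the invariant that makes the induction go through. The key idea is that each state of the Berstel adder carries, in addition to its name in $\Scal = \{\0\0\0,\0\0\1,\0\1\0,\1\0\0,\1\0\1\}$, a numerical datum — a small integer (the value in $\{0,1,\dots,7\}$) — which records a bounded ``carry'' or discrepancy between the partial input read so far and the partial output produced so far. Concretely, I would reformulate the claim in a form amenable to induction: for every $u\in\Aldel^k$, if $s_k=\Bcal_{last}(u)=q.\,c$ with $q\in\Scal$ and $c\in\{0,\dots,7\}$, then
\[
  \valF(u) \;=\; \valF\bigl(\Bcal(u)\bigr)\cdot(\text{appropriate Fibonacci shift}) \;+\; \valF(q) \;+\; c,
\]
where the ``Fibonacci shift'' accounts for the fact that $\Bcal(u)$ has $k$ more high-order digit-positions than the three-letter tail, and $\valF(q)$ is the value contributed by the three suspended digits. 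The precise algebraic form of this invariant is what must be pinned down first: one reads a letter of $u$ at the most significant end, so the bookkeeping is in terms of $F_{k+j}$ versus $F_j$, and one must be careful with the convention $F_0=1$, $F_1=2$.

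First I would set up notation: write $u = a\,u'$ with $a\in\Aldel$ the leading (most significant) letter and $u'\in\Aldel^{k-1}$, and let $s_1 = \delta_\Bcal(\0\0\0.0,\,a)$ be the state after one step, with output letter $b_0 = \eta_\Bcal(\0\0\0.0,\,a)$. The base case $k=0$ (empty input) is immediate: $\Bcal(\varepsilon)=\varepsilon$, $\Bcal_\downarrow(\varepsilon)=\phi_\Bcal(\0\0\0.0)=\0\0\0$, and $\valF(\varepsilon)=0=\valF(\0\0\0)$. For the inductive step I would verify, transition by transition — there are only $30$ of them, grouped by source state — that a single step of the machine preserves the invariant: reading $a$ from state $q.\,c$ emits one output letter, pushes a new three-letter suspended block, and updates $(q,c)$ in a way that the Fibonacci identity $F_{n}=F_{n-1}+F_{n-2}$ exactly absorbs. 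Because the states are labelled by their own semantics, this is a finite, mechanical check; the content is entirely in choosing the invariant so that each of the $30$ local checks reduces to an instance of the Fibonacci recurrence (or to the neutral-prefix-type identities $\valF(\0w)=\valF(w)$, $\valF(\1\1w)=\valF(\1\0\0w)$ that drive carry propagation).

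The main obstacle — and the step where care is genuinely required — is getting the invariant right at the \emph{boundary} between the $k$-letter output prefix $\Bcal(u)$ and the $3$-letter tail $\Bcal_\downarrow(u)$, i.e.\ making the indices match the convention $F_0=1$, $F_1=2$ and ensuring that the ``suspended'' three digits encoded by the state name $q\in\Scal$ are weighted by $F_2,F_1,F_0$ while everything already output is weighted by $F_{k+1},\dots,F_3$ or similar. A secondary subtlety is that the output alphabet is $\Alsig$ whereas the input alphabet is $\Aldel$, so the digit $\2$ must always be resolved within the current window — this is precisely why the window has width three (two is not enough for Fibonacci, because $\0\2\to\1\0\0$ requires looking one position further than binary carry) — and the invariant must be strong enough to certify that the machine never ``gets stuck'' needing a digit $\geq 2$ in the output. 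Once the invariant is correctly stated, I expect the $30$ transition checks to be routine, and the theorem follows by taking $u$ arbitrary and reading off $\valF(u) = \valF(\Bcal(u)) \cdot(\text{shift}) + \valF(q) = \valF(\Bcal(u)\cdot\Bcal_\downarrow(u))$ from the final state, noting that $c=0$ for every reachable final state (or, if not, that $c$ has been folded into $q$ already). I note that the excerpt defers a full proof to Section~\ref{sec:proof-Berstel-adder}, where the equivalence-relation argument on the $\{0,\dots,7\}$ component is used to show finiteness of the state set; here I would simply invoke the machine as given and check correctness directly.
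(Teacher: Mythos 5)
Your overall strategy---take the machine $\Bcal$ as given and verify it directly by induction on $|u|$, with an invariant attached to each state $q.c$---is a legitimate route, and it is genuinely different from the paper's: the paper does not verify $\Bcal$ but \emph{reconstructs} it, first proving (Lemma~\ref{lem:append0-change-at-most-1}, by a linear-algebra argument resting on three Fibonacci identities) that if $\valF(u)=\valF(ws)$ then appending a digit changes the two values by at most $1$ relative to each other, hence (Lemma~\ref{lem:tree-structure}, Proposition~\ref{prop:uniqueness-new}) that there is a \emph{unique} canonical translation $u\mapsto w_u s_u$, and only then showing via the map $\theta$ and the equivalence relation of Definition~\ref{def:equivalence} that this translation is computed by a $10$-state machine which coincides with $\Bcal$. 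Your route trades this existence-and-uniqueness construction for a direct certification, which in principle is shorter.

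However, as written your proposal has a genuine gap, and it sits exactly where you say "the precise algebraic form of this invariant is what must be pinned down first." The invariant you display, $\valF(u)=\valF(\Bcal(u))\cdot(\text{shift})+\valF(q)+c$, is false: for $u=\1$ the machine ends in state $\0\0\1.2$, yet $\valF(\1)=1=\valF(\0\cdot\0\0\1)$ with no $+2$ correction; the correct relation is $\valF(u)=\valF(\Bcal(u)\cdot q)$ with \emph{no} $c$ term, and $c$ is auxiliary data that never enters the final value (so your fallback "note that $c=0$ for every reachable final state" is also false---$c$ ranges over all of $\{0,\dots,7\}$ on reachable states). More seriously, the corrected relation $\valF(u)=\valF(\Bcal(u)\cdot q)$ is \emph{not inductively closed}: knowing $\valF(u)=\valF(wq)$ does not determine $\valF(u\0)-\valF(wq\0)$, which can be $-1$, $0$ or $+1$ (this is precisely the content of Lemma~\ref{lem:append0-change-at-most-1}, and bounding that discrepancy uniformly in $k$ is the real work of the paper's proof). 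Consequently your $30$ "routine" local checks cannot be carried out against this invariant: each one is not a finite numerical identity but a claim quantified over all lengths $k$. The fix is to strengthen the invariant so that $c=\theta(u)$ carries the discrepancy together with information from the \emph{previous} window---concretely, the self-sustaining statement is the two-step relation $\valF(ua)=\valF(w_u\0\0\0\0)+c+a$ for all $a\in\Aldel$, whose preservation under one transition is established by the computation in Equation~\eqref{eq:val_of_uu0}, using $\valF(ua)=\valF(u)+\valF(uu_1^{-1})+u_1+a$ and hence the invariant at the two most recent prefixes. Your instinct that $c$ records a bounded carry is right, but until the invariant is stated in this two-step form and its propagation is proved, the induction does not close.
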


\begin{table}[t]
\[
    \footnotesize
\begin{array}{c||c|c|c||c|c|c}
    u & \valF(u) & \Bcal(u)\Bcal_{\downarrow}(u) & {\valF(\Bcal(u)\Bcal_{\downarrow}(u))} &
    \valFc(u) & \Tcal(u)\Tcal_{\downarrow}(u) & {\valFc(\Tcal(u)\Tcal_{\downarrow}(u))} \\ 
\hline
\hline
\texttt{0 }  & 0  &\texttt{ 0}\cdot\texttt{000}  &   0  &  0  &\varepsilon\cdot\texttt{000} & 0  \\
\texttt{1 }  & 1  &\texttt{ 0}\cdot\texttt{001}  &   1  &  -1 &\varepsilon\cdot\texttt{101} & -1 \\
\texttt{2 }  & 2  &\texttt{ 0}\cdot\texttt{010}  &   2  &  -2 &\varepsilon\cdot\texttt{100} & -2 \\
\hline
\texttt{00}  & 0  &\texttt{00}\cdot\texttt{000}  &   0  &  0  &\texttt{0}\cdot\texttt{000} & 0  \\
\texttt{01}  & 1  &\texttt{00}\cdot\texttt{001}  &   1  &  1  &\texttt{0}\cdot\texttt{001} & 1  \\
\texttt{02}  & 2  &\texttt{00}\cdot\texttt{010}  &   2  &  2  &\texttt{0}\cdot\texttt{010} & 2  \\
\texttt{10}  & 2  &\texttt{00}\cdot\texttt{010}  &   2  &  -1 &\texttt{1}\cdot\texttt{010} & -1 \\
\texttt{11}  & 3  &\texttt{00}\cdot\texttt{100}  &   3  &  0  &\texttt{1}\cdot\texttt{100} & 0  \\
\texttt{12}  & 4  &\texttt{00}\cdot\texttt{101}  &   4  &  1  &\texttt{1}\cdot\texttt{101} & 1  \\
\texttt{20}  & 4  &\texttt{00}\cdot\texttt{101}  &   4  &  -2 &\texttt{1}\cdot\texttt{001} & -2 \\
\texttt{21}  & 5  &\texttt{01}\cdot\texttt{000}  &   5  &  -1 &\texttt{1}\cdot\texttt{010} & -1 \\
\texttt{22}  & 6  &\texttt{01}\cdot\texttt{001}  &   6  &  0  &\texttt{1}\cdot\texttt{100} & 0  \\
\hline
\texttt{000} & 0  &\texttt{000}\cdot\texttt{000} &   0  &  0 & \texttt{00}\cdot \texttt{000} & 0 \\
\texttt{001} & 1  &\texttt{000}\cdot\texttt{001} &   1  &  1 & \texttt{00}\cdot \texttt{001} & 1 \\
\texttt{002} & 2  &\texttt{000}\cdot\texttt{010} &   2  &  2 & \texttt{00}\cdot \texttt{010} & 2 \\
\texttt{010} & 2  &\texttt{000}\cdot\texttt{010} &   2  &  2 & \texttt{00}\cdot \texttt{010} & 2 \\
\texttt{011} & 3  &\texttt{000}\cdot\texttt{100} &   3  &  3 & \texttt{00}\cdot \texttt{100} & 3 \\
\texttt{012} & 4  &\texttt{000}\cdot\texttt{101} &   4  &  4 & \texttt{00}\cdot \texttt{101} & 4 \\
\texttt{020} & 4  &\texttt{000}\cdot\texttt{101} &   4  &  4 & \texttt{00}\cdot \texttt{101} & 4 \\
\texttt{021} & 5  &\texttt{001}\cdot\texttt{000} &   5  &  5 & \texttt{01}\cdot \texttt{000} & 5 \\
\texttt{022} & 6  &\texttt{001}\cdot\texttt{001} &   6  &  6 & \texttt{01}\cdot \texttt{001} & 6 \\
\texttt{100} & 3  &\texttt{000}\cdot\texttt{100} &   3  & -2 & \texttt{10}\cdot \texttt{100} & -2 \\
\texttt{101} & 4  &\texttt{000}\cdot\texttt{101} &   4  & -1 & \texttt{10}\cdot \texttt{101} & -1 \\
\texttt{102} & 5  &\texttt{001}\cdot\texttt{000} &   5  &  0 & \texttt{11}\cdot \texttt{000} & 0 \\
\texttt{110} & 5  &\texttt{001}\cdot\texttt{000} &   5  &  0 & \texttt{11}\cdot \texttt{000} & 0 \\
\texttt{111} & 6  &\texttt{001}\cdot\texttt{001} &   6  &  1 & \texttt{11}\cdot \texttt{001} & 1 \\
\texttt{112} & 7  &\texttt{001}\cdot\texttt{010} &   7  &  2 & \texttt{11}\cdot \texttt{010} & 2 \\
\texttt{120} & 7  &\texttt{001}\cdot\texttt{010} &   7  &  2 & \texttt{11}\cdot \texttt{010} & 2 \\
\texttt{121} & 8  &\texttt{001}\cdot\texttt{100} &   8  &  3 & \texttt{11}\cdot \texttt{100} & 3 \\
\texttt{122} & 9  &\texttt{001}\cdot\texttt{101} &   9  &  4 & \texttt{11}\cdot \texttt{101} & 4 \\
\texttt{200} & 6  &\texttt{001}\cdot\texttt{001} &   6  & -4 & \texttt{10}\cdot \texttt{001} & -4 \\
\texttt{201} & 7  &\texttt{001}\cdot\texttt{010} &   7  & -3 & \texttt{10}\cdot \texttt{010} & -3 \\
\texttt{202} & 8  &\texttt{001}\cdot\texttt{100} &   8  & -2 & \texttt{10}\cdot \texttt{100} & -2 \\
\texttt{210} & 8  &\texttt{010}\cdot\texttt{000} &   8  & -2 & \texttt{10}\cdot \texttt{100} & -2 \\
\texttt{211} & 9  &\texttt{010}\cdot\texttt{001} &   9  & -1 & \texttt{10}\cdot \texttt{101} & -1 \\
\texttt{212} & 10 &\texttt{010}\cdot\texttt{010} &   10 &  0 & \texttt{11}\cdot \texttt{000} & 0 \\
\texttt{220} & 10 &\texttt{010}\cdot\texttt{010} &   10 &  0 & \texttt{11}\cdot \texttt{000} & 0 \\
\texttt{221} & 11 &\texttt{010}\cdot\texttt{100} &   11 &  1 & \texttt{11}\cdot \texttt{001} & 1 \\
\texttt{222} & 12 &\texttt{010}\cdot\texttt{101} &   12 &  2 & \texttt{11}\cdot \texttt{010} & 2 \\
\end{array}
\]
    \caption{The table lists for every ternary word $u\in\{\0,\1,\2\}^+$ of length $\leq3$
    the output $\Bcal(u)\cdot\Bcal_{\downarrow}(u)\in\{\0,\1\}^*$ under the Berstel adder $\Bcal$
    and the output $\Tcal(u)\cdot\Tcal_{\downarrow}(u)\in\{\0,\1\}^*$ under the
    modified Berstel adder $\Tcal$.
    The other columns illustrate that $\Bcal$ preserves the
    value in the Fibonacci numeration system whereas 
    $\Tcal$ preserves the 
    value in the Fibonacci complement numeration system.}
    \label{table:Berstel-adder-length-3-word}
\end{table}

The behavior of the Berstel adder on all ternary words of length $\leq 3$ is
listed in Table~\ref{table:Berstel-adder-length-3-word}.
Below, we illustrate Theorem~\ref{thm:Berstel} on two examples.

\begin{example}\label{example:berstel_addition}
	Feeding the Berstel adder $\Bcal$ in Figure~\ref{fig:berstel_transducer} with the word $u=\2\2\2\0\1\2\1$ gives
	\[
	\0\0\0.0 \xrightarrow{\2/\0} \0\1\0.4\xrightarrow{\2/\1} \0\0\1.2 \xrightarrow{\2/\0}
	\1\0\1.7 \xrightarrow{\0/\1} \0\1\0.3 \xrightarrow{\1/\0} \1\0\1.7
	\xrightarrow{\2/\1} \1\0\1.7 \xrightarrow{\1/\1} \1\0\0.5;
	\]
	therefore the last state is $\Bcal_{\last}(u)=\1\0\0.5$ and we obtain $\Bcal(u)\cdot \Bcal_{\downarrow}(u) = \0\1\0\1\0\1\1\ \cdot \1\0\0$. Thus
	\[
		\valF(\2\2\2\0\1\2\1) = 42+26+16+3+4+1 = 92 = 55+21+8+5+3 = \valF(\0\1\0\1\0\1\1\1\0\0).
	\]
\end{example}

\begin{example}\label{example:berstel_addition-2}
Here is how Berstel adder can be used to compute the sum $33+25$.
First, we express 33 and 25 by their Fibonacci representation
(in general,
if representations do not have the same length, the shorter one is padded
with leading $\0$'s).
Then, we add them digit by digit to obtain a ternary word in $\{\0,\1,\2\}^*$:
\begin{equation*}
\begin{aligned}
&33   \quad&\texttt{1010101} & \\
+&25  \quad&\texttt{1000101} & \\[-.4cm]
\cline{1-3}
&58   \quad&\texttt{2010202} &
\end{aligned}
\end{equation*}
Reading from left to right and giving the word $u=\texttt{2010202}$ as input to
the Berstel adder (see Figure~\ref{fig:berstel_transducer}), we obtain
the following path from the initial state $\texttt{000}.0$:
\[
\texttt{000}.0 \xrightarrow{\2/\0}  \texttt{010}.4
\xrightarrow{\0/\0}  \texttt{101}.6 
\xrightarrow{\1/\1}  \texttt{010}.4
\xrightarrow{\0/\0}  \texttt{101}.6
\xrightarrow{\2/\1}  \texttt{100}.6
\xrightarrow{\0/\1}  \texttt{001}.1
\xrightarrow{\2/\0}  \texttt{100}.6.
\]
Therefore, the output word is $\Bcal(u)=\texttt{0010110}$
and the path ends in state $\Bcal_{\last}(u)=\texttt{100}.6$.
Removing the last digit (6) of the last state, which we can ignore for now
(see Definition~\ref{def:Delta-map} for details),
we obtain the three-letter extra output word $\Bcal_{\downarrow}(u)=\texttt{100}$.
Concatenating the two words gives
$\texttt{0010110}\cdot\texttt{100}$,
which has the correct Fibonacci value 33+25=58
\[
\valF\left(\Bcal(u)\cdot\Bcal_{\downarrow}(u)\right)
= \valF\left(\texttt{0010110}\cdot\texttt{100}\right)
= 3+8+13+34=58;
\]
however, it is not the normal representation as it contains leading $\0$s
and consecutive $1$s. 
Indeed, it is known \cite{zbMATH03947643,MR906959,MR1093759} that 
no single right-to-left and no single left-to-right transducer can normalize
all words $u\in\Aldel^k$. 
\end{example}

In Section~\ref{sec:proof-Berstel-adder}, we provide a constructive proof of
Theorem~\ref{thm:Berstel} based on linear algebra and identities involving
Fibonacci numbers. We use
Theorem~\ref{thm:Berstel} to prove Theorem~\ref{theorem:main}.

\section{Proof of Theorem~\ref{theorem:main}}\label{sec:addtion-Fcal-Z}

Taking the Berstel adder $\Bcal$ as a base, we derive another Mealy machine
\[
\Tcal = (Q\cup \{\texttt{start}\}, \texttt{start}, 
         \Aldel, \Alsig\cup\{\varepsilon\}, \delta_\Tcal, \eta_\Tcal, \phi_\Tcal)
\]
by adding a new initial state $\texttt{start}$
and extending the maps $\delta_\Bcal$ and $\eta_\Bcal$ in the following way:
\begin{align*}
    \delta_\Tcal(q, a) &= 
        \begin{cases}
            \delta_\Bcal(q, a), & \text{ if } q\in Q \text{ and } a\in\Aldel;\\
            \0\0\0.0, &           \text{ if } q=\texttt{start} \text{ and } a=\0;\\
            \1\0\1.7, &           \text{ if } q=\texttt{start} \text{ and } a=\1;\\
            \1\0\0.6, &           \text{ if } q=\texttt{start} \text{ and } a=\2,\\
        \end{cases}\\
        \eta_\Tcal(q, a) &= 
        \begin{cases}
            \eta_\Bcal(q, a), & \text{ if } q\in Q \text{ and } a\in\Aldel;\\
            \varepsilon, &      \text{ if } q=\texttt{start} \text{ and } a\in\Aldel,
        \end{cases}\\
        \phi_\Tcal(q) &= 
        \begin{cases}
            \phi_\Bcal(q), & \text{ if } q\in Q;\\
            \0\0\0,        & \text{ if } q=\texttt{start}.
        \end{cases}
\end{align*}
The Mealy machine $\Tcal$ is illustrated in Figure~\ref{fig:berstel_transducer}
with solid and dashed edges.

Before proving the theorem, we need two lemmas.
The first lemma shows that the first letter of the output word depends only on
the first letter of the input word.

\begin{lemma}\label{lem:properties_of_T}
	Let $v\in \Aldel^*$. Then
	\begin{enumerate}[(i)]
		\item $\Tcal(\0 v)\cdot\Tcal_{\downarrow}(\0v) \in \0\Alsig^*$,
		\item $\Tcal(a v)\cdot\Tcal_{\downarrow}(av)\in \1\Alsig^*$ for every $a\in\{\1,\2\}$.
	\end{enumerate}
\end{lemma}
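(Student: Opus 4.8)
The plan is to reduce both items to a finite inspection of the transition graph of $\Bcal$, concentrated on the three states that $\Tcal$ can reach in one step from \texttt{start}:
\[
\delta_\Tcal(\texttt{start},\0)=\0\0\0.0,\qquad
\delta_\Tcal(\texttt{start},\1)=\1\0\1.7,\qquad
\delta_\Tcal(\texttt{start},\2)=\1\0\0.6.
\]
The first step is to record, by reading off Figure~\ref{fig:berstel_transducer} (or, equivalently, the length-two rows of Table~\ref{table:Berstel-adder-length-3-word}), that all three transitions leaving $\0\0\0.0$ have output letter $\0$, while all three transitions leaving $\1\0\1.7$ and all three leaving $\1\0\0.6$ have output letter $\1$. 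Since $\phi_\Bcal$ is the canonical projection onto $\Scal$, we also have $\phi_\Bcal(\0\0\0.0)=\0\0\0$, $\phi_\Bcal(\1\0\1.7)=\1\0\1$ and $\phi_\Bcal(\1\0\0.6)=\1\0\0$, the first of which begins with $\0$ and the other two with $\1$. (These are presumably exactly the properties for which the three \texttt{start}-transitions were chosen.)

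The second step is the argument itself, split according to whether the tail $v$ is empty. If $v=\varepsilon$, then $\Tcal(av)=\varepsilon$ and $\Tcal_{\downarrow}(av)=\phi_\Tcal(\delta_\Tcal(\texttt{start},a))$, which equals $\0\0\0$ for $a=\0$ and equals $\1\0\1$ or $\1\0\0$ for $a\in\{\1,\2\}$; in every case the word $\Tcal(av)\cdot\Tcal_{\downarrow}(av)$ lies in the asserted language. If $v=v_{m-1}\cdots v_0$ with $m\geq1$, then on input $av$ the machine reads $a$ from \texttt{start} with empty output, lands in one of the three states above, and then reads $v_{m-1}$; hence the first letter of $\Tcal(av)$ --- which, being a word of length $m\geq1$, is also the first letter of $\Tcal(av)\cdot\Tcal_{\downarrow}(av)$ --- is the output letter of that transition, namely $\0$ if $a=\0$ and $\1$ if $a\in\{\1,\2\}$. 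This establishes (1) and (2).

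I do not expect any genuine obstacle: the statement is combinatorial, and its content is the check of nine transition outputs in the Berstel adder together with the three trivial values of $\phi_\Bcal$. The only point requiring a little care is the bookkeeping around the \texttt{start}-transitions, which emit $\varepsilon$ rather than a letter, so that the "first letter'' of $\Tcal(av)\cdot\Tcal_{\downarrow}(av)$ is contributed either by the first transition out of $\0\0\0.0$, $\1\0\1.7$ or $\1\0\0.6$ (when $v\neq\varepsilon$), or by the extra output word $\phi_\Tcal$ of that state (when $v=\varepsilon$).
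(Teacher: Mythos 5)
Your proof is correct and follows essentially the same route as the paper's: a case split on whether $v$ is empty, combined with a finite inspection of the three \texttt{start}-transitions, the outputs of the transitions leaving $\0\0\0.0$, $\1\0\1.7$ and $\1\0\0.6$, and the values of $\phi$ at those states. Your version is slightly more explicit about exactly which nine transition outputs must be checked, but the argument is the same.
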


\begin{proof}
    The proof is based on Figure~\ref{fig:berstel_transducer}. We observe that
    \begin{enumerate}[(i)]
    \item $\phi_\Tcal(\0\0\0.0)$ starts with $\0$ and $\eta_\Tcal(\0\0\0.0, b) = 0$ for all 
    $b \in\{\0,\1,\2\}$,
    \item $\phi_\Tcal(\1\0\1.7)$ and $\phi_\Tcal(\1\0\0.6)$ both start with $\1$, and 
    $\eta_\Tcal(\1\0\1.7, b) = \eta_\Tcal(\1\0\0.6, b) = \1$ for all $b\in\{\0,\1,\2\}$.\qedhere
    \end{enumerate}
\end{proof}

Since the strongly connected component of the Mealy machine $\Tcal$ corresponds
to the Berstel adder $\Bcal$, we have the following relations between the two machines.

\begin{lemma}\label{lem:relation_T_B}
	Let $v\in \Aldel^*$. Then
	\begin{enumerate}[(i)]
		\item $\Bcal(\0v)\cdot\Bcal_{\downarrow}(\0v) = \0\Tcal(\0v)\cdot\Tcal_{\downarrow}(\0v)$,
		\item $\Bcal(\1\0\1v)\cdot\Bcal_{\downarrow}(\1\0\1v) = \0\0\0\Tcal(\1v)\cdot\Tcal_{\downarrow}(\1v)$,
		\item $\Bcal(\2\0\2v)\cdot\Bcal_{\downarrow}(\2\0\2v) = \0\0\1\Tcal(\2v)\cdot\Tcal_{\downarrow}(\2v)$.
	\end{enumerate}
\end{lemma}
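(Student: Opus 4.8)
The plan is to verify each of the three identities by unwinding the definition of the modified machine $\Tcal$ and tracing how $\Bcal$ behaves on the relevant prefixes. The key structural observation is that $\Tcal$ is literally $\Bcal$ with the initial state $\0\0\0.0$ replaced by a fresh state $\texttt{start}$ whose three outgoing transitions read a single letter, output $\varepsilon$, and land in one of the states $\0\0\0.0$, $\1\0\1.7$, $\1\0\0.6$; moreover $\phi_\Tcal(\texttt{start})=\0\0\0$. Since all other states and transitions of $\Tcal$ coincide with those of $\Bcal$, once a computation of $\Tcal$ has left $\texttt{start}$ it proceeds exactly as a computation of $\Bcal$ started from the corresponding landing state. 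So the whole proof reduces to: (a) identifying, for each of the prefixes $\0$, $\1\0\1$, $\2\0\2$, the state $\Bcal$ reaches after reading that prefix from $\0\0\0.0$, together with the output letters $\Bcal$ emits along the way; and (b) checking that that state equals the state $\texttt{start}\xrightarrow{\text{first letter}}\cdot$ lands in.

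\textbf{Step 1 (the easy case, item (1)).} Read $\0$ from $\Bcal$'s initial state $\0\0\0.0$. From Figure~\ref{fig:berstel_transducer} this transition is $\0\0\0.0\xrightarrow{\0/\0}\0\0\0.0$, i.e.\ it outputs $\0$ and returns to $\0\0\0.0$. On the other hand, $\texttt{start}\xrightarrow{\0/\varepsilon}\0\0\0.0$. Hence for any $v$, the $\Tcal$-computation on $\0v$ is: $\texttt{start}\to\0\0\0.0$ (output $\varepsilon$) followed by the $\Bcal$-computation on $v$ starting at $\0\0\0.0$; while the $\Bcal$-computation on $\0v$ is: $\0\0\0.0\to\0\0\0.0$ (output $\0$) followed by the same $\Bcal$-computation on $v$. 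Both end in the same last state, so the $\phi$-contributions agree (note $\phi_\Tcal=\phi_\Bcal$ on $Q$), giving $\Bcal(\0v)\Bcal_\downarrow(\0v)=\0\cdot\Tcal(\0v)\Tcal_\downarrow(\0v)$.

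\textbf{Steps 2--3 (items (2) and (3)).} These are the same argument with a longer prefix. For (2): compute in $\Bcal$ the path on input $\1\0\1$ from $\0\0\0.0$; one reads off from Figure~\ref{fig:berstel_transducer} the transitions $\0\0\0.0\xrightarrow{\1/\0}\0\0\1.1\xrightarrow{\0/\0}\0\1\0.3\xrightarrow{\1/\0}\1\0\1.7$ (the exact intermediate states must be checked against the figure, but the point is that the three output letters are $\0\0\0$ and the terminal state is $\1\0\1.7$), which is exactly the landing state of $\texttt{start}\xrightarrow{\1/\varepsilon}\1\0\1.7$. So the $\Tcal$-run on $\1v$ and the $\Bcal$-run on $\1\0\1v$ share a common tail starting at $\1\0\1.7$, and the prefix contributes $\0\0\0$ to the output of the latter and $\varepsilon$ to the former; the last states coincide, so $\phi$ agrees, yielding the stated identity. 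Item (3) is identical: trace $\Bcal$ on $\2\0\2$ from $\0\0\0.0$, observe the three output letters spell $\0\0\1$ and the run terminates at $\1\0\0.6$, which matches $\texttt{start}\xrightarrow{\2/\varepsilon}\1\0\0.6$; conclude as before.

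\textbf{The only real obstacle} is purely bookkeeping: reading the three intermediate transitions and their output labels correctly off Figure~\ref{fig:berstel_transducer} for the prefixes $\1\0\1$ and $\2\0\2$, and confirming that $\Bcal$ emits precisely $\0\0\0$ (resp.\ $\0\0\1$) and halts in the advertised state $\1\0\1.7$ (resp.\ $\1\0\0.6$). There is no conceptual difficulty — the decomposition of an automaton run at a designated state is automatic — and the word-length count $|{\Bcal(\text{prefix }v)}|=|\text{prefix}|$ combined with the fact that both sides have matching last states makes the $\phi_\downarrow$-terms cancel against each other. One should double-check the consistency of this lemma against Table~\ref{table:Berstel-adder-length-3-word} on a few short words (e.g.\ $\Bcal(\1\0\1)=\0\0\0$, $\Bcal_\downarrow(\1\0\1)=\1\0\1$ versus $\Tcal(\1)\Tcal_\downarrow(\1)=\1\cdot\0\1$, wait — here one wants $v=\varepsilon$, so the claim reads $\Bcal(\1\0\1)\Bcal_\downarrow(\1\0\1)=\0\0\0\cdot\Tcal(\1)\Tcal_\downarrow(\1)$, i.e.\ $\0\0\0\,\1\0\1 = \0\0\0\cdot\varepsilon\cdot\1\0\1$, which checks out) as a sanity test before writing the formal version.
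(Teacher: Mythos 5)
Your proposal is correct and follows essentially the same route as the paper: decompose the run of $\Bcal$ on input $pv$ at the state reached after the prefix $p\in\{\0,\1\0\1,\2\0\2\}$, check that this prefix outputs $\0$, $\0\0\0$, $\0\0\1$ respectively and lands exactly in the state that $\texttt{start}$ sends $\0$, $\1$, $\2$ to, and then use the fact that $\Tcal$ and $\Bcal$ coincide (including $\phi$) on all non-\texttt{start} states so the tails of the two computations — outputs and last states — agree. The paper merely formalizes the same decomposition with the notation $M^s$ for a machine re-rooted at state $s$ and the splitting identities $M^s(uv)=M^s(u)\cdot M^t(v)$, $M^s_{last}(uv)=M^t_{last}(v)$ with $t=M^s_{last}(u)$.
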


\begin{proof}
    In this proof, it is appropriate to introduce the notation
        $M^{s} = (S, s, A, B, \delta, \eta, \phi)$
    to denote the Mealy machine with a chosen initial state $s\in S$.
    In particular, $M^{s}$ satisfies the condition that
    \begin{align}
        M^s(uv) &= M^s(u)\cdot M^t(v)   \label{eq:mealy-split0}\\
        M^s_{\last}(uv) &= M^t_{\last}(v) \label{eq:mealy-split1}
    \end{align}
    for every $u,v\in A^*$ where $t=M^s_{\last}(u)$.

    Moreover, since the strongly connected component of the Mealy machine
    $\Tcal$ corresponds to the Berstel adder $\Bcal$ and both machines
    have the same function $\phi_\Tcal = \phi_\Bcal$, the following equations hold
    for every $q\in Q$ and every $u\in\Aldel^*$:
    \begin{equation}\label{eq:B-to-T}
        \Bcal^{q}(u) = \Tcal^{q}(u),
        \qquad
        \Bcal^{q}_{\last}(u)= \Tcal^{q}_{\last}(u)
        \qquad
        \text{ and }
        \qquad
        \Bcal^{q}_{\downarrow}(u)= \Tcal^{q}_{\downarrow}(u).
    \end{equation}

    Using Equations
\eqref{eq:mealy-split0},
\eqref{eq:mealy-split1} and
\eqref{eq:B-to-T} and the particular form of the machines $\Bcal$ and $\Tcal$, we have
    \begin{align*}
		\Bcal(\0v)\cdot\Bcal_{\downarrow}(\0v)
        &= \Bcal(\0)\cdot\Bcal^{\texttt{000}.0}(v)\cdot\Bcal^{\texttt{000}.0}_{\downarrow}(v)\\
        &= \0\cdot\Tcal^{\texttt{000}.0}(v)\cdot\Tcal^{\texttt{000}.0}_{\downarrow}(v)\\
        &= \0\cdot\Tcal(\0v)\cdot\Tcal_{\downarrow}(\0v).
    \end{align*}
    Also, we have
    \begin{align*}
		\Bcal(\1\0\1v)\cdot\Bcal_{\downarrow}(\1\0\1v)
        &= \Bcal(\1\0\1)\cdot\Bcal^{\texttt{101}.7}(v)\cdot\Bcal^{\texttt{101}.7}_{\downarrow}(v)\\
        &= \0\0\0\cdot\Tcal^{\texttt{101}.7}(v)\cdot\Tcal^{\texttt{101}.7}_{\downarrow}(v)\\
        &= \0\0\0\cdot\Tcal(\1v)\cdot\Tcal_{\downarrow}(\1v)
    \end{align*}
    and similarly,
    \begin{align*}
		\Bcal(\2\0\2v)\cdot\Bcal_{\downarrow}(\2\0\2v)
        &= \Bcal(\2\0\2)\cdot\Bcal^{\texttt{100}.6}(v)\cdot\Bcal^{\texttt{100}.6}_{\downarrow}(v)\\
        &= \0\0\1\cdot\Tcal^{\texttt{100}.6}(v)\cdot\Tcal^{\texttt{100}.6}_{\downarrow}(v)\\
        &= \0\0\1\cdot\Tcal(\2v)\cdot\Tcal_{\downarrow}(\2v).\qedhere
    \end{align*}
\end{proof}

We can now prove the main result.

\begin{proof}[Proof of Theorem~\ref{theorem:main}]
    Let $u\in\Aldel^+$.
    We split the proof into three cases according to the first letter of $u$.
    Let $a\in\Aldel$ and $v\in\Aldel^*$ be such that $u=av$.
    Assume that $|v|=k$. 
    
    If $a=\0$, we have
    \begin{align*}
        \valFc(\0v) 
        &= \valF(\0v)            &\text{(Equation~\eqref{eq:valFc-vs-valF})} \\
        &= \valF(\Bcal(\0v)\cdot\Bcal_{\downarrow}(\0v))
                                    &\text{(Theorem~\ref{thm:Berstel})}\\
        &= \valF(\0\Tcal(\0v)\cdot\Tcal_{\downarrow}(\0v))
                                    &\text{(Lemma~\ref{lem:relation_T_B})}\\
        &= \valF(\Tcal(\0v)\cdot\Tcal_{\downarrow}(\0v))\\
        &= \valFc(\Tcal(\0v)\cdot\Tcal_{\downarrow}(\0v))
                                    &\text{(Equation~\eqref{eq:valFc-vs-valF}, 
                                    Lemma~\ref{lem:properties_of_T}).}
    \end{align*}
    
    If $a=\1$, we have
    \begin{align*}
        \valFc(\1v) 
        &= \valFc(\1\0\1v)          &\text{(Lemma~\ref{lem:neutral_prefix_2}})\\
        &= \valF(\1\0\1v) - F_{k+3} &\text{(Equation~\eqref{eq:valFc-vs-valF})} \\
        &= \valF(\Bcal(\1\0\1v)\cdot\Bcal_{\downarrow}(\1\0\1v)) - F_{k+3}
                                    &\text{(Theorem~\ref{thm:Berstel})}\\
        &= \valF(\0\0\0\cdot\Tcal(\1v)\cdot\Tcal_{\downarrow}(\1v)) - F_{k+3} 
                                    &\text{(Lemma~\ref{lem:relation_T_B})}\\
        &= \valF(\Tcal(\1v)\cdot\Tcal_{\downarrow}(\1v)) - F_{k+3} \\
        &= \valFc(\Tcal(\1v)\cdot\Tcal_{\downarrow}(\1v))
                                    &\text{(Equation~\eqref{eq:valFc-vs-valF}, 
                                    Lemma~\ref{lem:properties_of_T}).}
    \end{align*}

    If $a=\2$, we proceed similarly to the previous case and we have
    \begin{align*}
        \valFc(\2v) 
        &= \valFc(\2\0\2v)           &\text{(Lemma~\ref{lem:neutral_prefix_2}})\\
        &= \valF(\2\0\2v) - 2F_{k+3} &\text{(Equation~\eqref{eq:valFc-vs-valF})} \\
        &= \valF(\Bcal(\2\0\2v)\cdot\Bcal_{\downarrow}(\2\0\2v)) - 2F_{k+3} 
                                    &\text{(Theorem~\ref{thm:Berstel})}\\
        &= \valF(\0\0\1\cdot\Tcal(\2v)\cdot\Tcal_{\downarrow}(\2v)) - 2F_{k+3}
                                    &\text{(Lemma~\ref{lem:relation_T_B})}\\
        &= \valF(\Tcal(\2v)\cdot\Tcal_{\downarrow}(\2v)) + F_{k+3} - 2F_{k+3} \\
        &= \valF(\Tcal(\2v)\cdot\Tcal_{\downarrow}(\2v)) - F_{k+3}\\
        &= \valFc(\Tcal(\2v)\cdot\Tcal_{\downarrow}(\2v))
                                    &\text{(Equation~\eqref{eq:valFc-vs-valF}, 
                                    Lemma~\ref{lem:properties_of_T}).\qedhere}
    \end{align*}
\end{proof}

In analogy with Example~\ref{example:berstel_addition}, we
illustrate Theorem~\ref{theorem:main} in the following example.
See also the behavior of $\Tcal$ on all ternary words of length $\leq 3$ in 
Table~\ref{table:Berstel-adder-length-3-word}.

\begin{example}\label{example:Tcal_addition}
	We feed the modified Berstel adder $\Tcal$ in Figure~\ref{fig:berstel_transducer} with the same word $u=\2\2\2\0\1\2\1$ as in Example~\ref{example:berstel_addition},
	this time obtaining
	\[
	\texttt{start} \xrightarrow{\2/\varepsilon} \1\0\0.6\xrightarrow{\2/\1} \1\0\0.5 \xrightarrow{\2/\1}
	\0\1\0.4 \xrightarrow{\0/\0} \1\0\1.6 \xrightarrow{\1/\1} \0\1\0.4
	\xrightarrow{\2/\1} \0\0\1.2 \xrightarrow{\1/\0} \1\0\0.5;
	\]
	therefore the last state is $\Tcal_{\last}(u)=\1\0\0.5$ and we obtain $\Tcal(u)\cdot \Tcal_{\downarrow}(u) = \1\1\0\1\1\0 \cdot \1\0\0$. Interpreting
	the results in the Fibonacci complement numeration system, we observe 
	\[
		\valFc(\2\2\2\0\1\2\1) = (-42)+42+16+3+4+1= 24 = (-34)+34+13+8+3= \valFc(\1\1\0\1\1\0\1\0\0).
	\]
\end{example}

\begin{example}\label{example:Tcal_addition-2}
Here is how the Mealy machine $\Tcal$ can be used to compute the sum $-1+(-9)$.
First, we express $(-1)$ and $(-9)$ by their Fibonacci complement representation
and we pad the shorter word with an appropriate neutral prefix ($\0\0$'s if it
starts with $\0$ or $\1\0$'s if it starts with $\1$, see
Definition~\ref{def:neutral-prefix}) so that they have the same length. Then,
we add them digit by digit to obtain a ternary word in $\{\0,\1,\2\}^*$:
\begin{equation*}
    \begin{aligned}
        &-1  \quad&\texttt{1010101} & \\
        +&(-9)  \quad&\texttt{1000101} & \\[-.4cm]
        \cline{1-3}
        &-10  \quad&\texttt{2010202} &
    \end{aligned}
\end{equation*}
Note that the resulting word $u=\texttt{2010202}$ coincides with the one
in Example~\ref{example:berstel_addition-2},
where we show properties of addition in Fibonacci numeration system.
Reading from left to right and giving the word $u=\texttt{2010202}$ as input to
the Mealy machine $\Tcal$ (see Figure~\ref{fig:berstel_transducer}), we obtain
the following path from the initial state $\texttt{start}$:
\[
    \texttt{start} \xrightarrow{\2/\varepsilon}  \texttt{100}.6
                   \xrightarrow{\0/\1}  \texttt{001}.1
                   \xrightarrow{\1/\0}  \texttt{010}.4
                   \xrightarrow{\0/\0}  \texttt{101}.6
                   \xrightarrow{\2/\1}  \texttt{100}.6
                   \xrightarrow{\0/\1}  \texttt{001}.1
                   \xrightarrow{\2/\0}  \texttt{100}.6.
\]
Therefore, the output word is $\Tcal(u)=\texttt{100110}$
and the path ends in state $\Tcal_{\last}(u)=\texttt{100}.6$.
Removing the last digit (6) of the last state,
we obtain the three-letter extra output word $\Tcal_{\downarrow}(u)=\texttt{100}$.
Concatenating the two words gives
$\texttt{100110}\cdot\texttt{100}$, which is a Fibonacci complement
representation of the sum $-1+(-9)$. We confirm that its Fibonacci
complement value is correct:
\[
    \valFc\left(\Tcal(u)\cdot\Tcal_{\downarrow}(u)\right)
    = \valFc\left(\texttt{100110}\cdot\texttt{100}\right)
    = 3+8+13-34=-10.
\]
\end{example}

\section{A characterization by increasing bijection}\label{sec:increasing-bijection}

The goal of this section is to show that the numeration system $\repFc \colon
\Z\to D$ is characterized 
by the fact of being an
increasing map with respect to some total order on its codomain
$D = \Sigma(\Sigma\Sigma)^*\setminus
\left( \Sigma^*\1\1\Sigma^* \cup \0\0\0\Sigma^* \cup \1\0\1\Sigma^*\right)$
where $\Sigma=\{\0,\1\}$; see
Theorem~\ref{thm:repF_is_increasing}.
The proof of the theorem is based on similar results for the maps
$\valF$ and $\repF$ which we start with.

\subsection{Ordering in the Fibonacci numeration system for $\N$}

In this short section, we recall known facts about the Fibonacci numeration system
also known as the Zeckendorf numeration system. We recall that $\repF$ is increasing
with respect to the radix order.
For completeness, we also provide the proofs of the results.

\begin{lemma}\label{lem:Fibo-intervals}
    Let $w\in\Sigma^*\setminus (\Sigma^* \1\1 \Sigma^* \cup \0\Sigma^*)$.
    If $w \neq \varepsilon$
    and $k\geq1$ is an integer, then
    \[
        |w| = k \text{ if and only if } F_{k-1}\leq \valF(w) < F_k.
    \]
    Moreover, $|w| = 0 \text{ if and only if } \valF(w) = 0$.
\end{lemma}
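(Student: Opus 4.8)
The plan is to argue by induction on the length $k$ of the word $w$, using the defining recurrence $F_n = F_{n-1}+F_{n-2}$ and the normalization constraint that $w$ has no two consecutive $\1$'s and a leading $\1$. First I would dispense with the degenerate cases: the only word in $\Sigma^*\setminus(\Sigma^*\1\1\Sigma^*\cup\0\Sigma^*)$ of length $0$ is $\varepsilon$, with $\valF(\varepsilon)=0$, and the only such word of length $1$ is $\1$, with $\valF(\1)=F_0=1$, so that $F_0\le 1 < F_1=2$ holds; this settles $k=0$ and $k=1$.

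For the inductive step, suppose $k\ge 2$ and let $w=\1 w'$ with $w'\in\Sigma^{k-1}$; the leading digit is $\1$ because $w\notin\0\Sigma^*$, and the ban on $\1\1$ forces $w'\in\0\Sigma^{k-2}$, so in fact $w=\1\0 w''$ with $w''\in\Sigma^{k-2}$ and $w''$ itself avoids $\1\1$. Then $\valF(w)=F_{k-1}+\valF(w'')$ where $\valF(w'')$ is computed with the digit weights $F_0,\dots,F_{k-3}$. For the lower bound, $\valF(w'')\ge 0$ gives $\valF(w)\ge F_{k-1}$ immediately. For the upper bound I would use the standard fact that the maximal value of a length-$m$ word over $\Sigma$ avoiding $\1\1$ (using weights $F_0,\dots,F_{m-1}$) is $F_m-1$: indeed the maximizing word is $\1\0\1\0\cdots$ and by the telescoping identity $F_{m-1}+F_{m-3}+\cdots = F_m-1$. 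Applying this with $m=k-2$ yields $\valF(w'')\le F_{k-2}-1$, hence $\valF(w)\le F_{k-1}+F_{k-2}-1 = F_k-1 < F_k$. This proves the forward implication $|w|=k \Rightarrow F_{k-1}\le\valF(w)<F_k$.

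The reverse implication then follows by a counting/packing argument: the forward direction shows that the normalized words of length exactly $k$ all have values in the interval $[F_{k-1},F_k)$, and as $k$ ranges over $\{1,2,3,\dots\}$ these half-open intervals $[F_0,F_1),[F_1,F_2),[F_2,F_3),\dots$ partition $\Z_{\ge 0}\setminus\{0\}=\{1,2,3,\dots\}$ (they are consecutive and disjoint since $F_{k-1}<F_k$ for all $k\ge 1$, and their union is $[F_0,\infty)=[1,\infty)$). Hence if $F_{k-1}\le\valF(w)<F_k$ with $w\neq\varepsilon$ normalized, the length of $w$ cannot be any $k'\neq k$, so $|w|=k$; and $\valF(w)=0$ forces $w=\varepsilon$ since every nonempty normalized word has value $\ge F_0=1$. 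The main obstacle here is really just making the maximal-value estimate $\valF(w'')\le F_{k-2}-1$ clean — one can either invoke it as the well-known bound for Zeckendorf-normalized words or prove the telescoping identity $\sum_{j\ge 0} F_{m-1-2j} = F_m-1$ by a one-line induction; everything else is bookkeeping.
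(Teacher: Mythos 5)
Your proof is correct, and it follows the same overall strategy as the paper's (strip the leading $\1$, use the no-$\1\1$ constraint to bound what remains, then deduce the converse from disjointness of the intervals), but the upper bound $\valF(w)<F_k$ is obtained by a genuinely different mechanism. The paper writes $w=\1\0^{k-q-1}w_{q-1}\cdots w_0$, where $w_{q-1}\cdots w_0$ is again a \emph{normalized} word (leading $\1$, no $\1\1$) of some length $q\le k-2$, applies the induction hypothesis to it to get $\sum_{i=0}^{q-1}w_iF_i<F_q\le F_{k-2}$, and concludes $\valF(w)<F_{k-1}+F_{k-2}=F_k$; the only auxiliary fact needed is the lemma itself for shorter words. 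You instead peel off only $\1\0$ and bound the remaining block $w''$ --- an arbitrary $\1\1$-avoiding word, not necessarily normalized --- by the maximal-value bound $\valF(w'')\le F_{m}-1$ for $m=k-2$. That bound is standard and correct, but be aware it is not quite delivered by the telescoping identity $F_{m-1}+F_{m-3}+\cdots=F_m-1$ alone: that identity evaluates the alternating word $\1\0\1\0\cdots$, and one still needs the short greedy/inductive observation that this word is maximal among $\1\1$-avoiding words of length $m$ (e.g., if the top digit is $\0$ reduce to length $m-1$; if it is $\1$ the next digit is $\0$ and one reduces to length $m-2$). Both routes work; the paper's is self-contained within the induction, while yours yields the sharp bound $\valF(w)\le F_k-1$ and has the merit of making the converse direction fully explicit via the fact that the intervals $[F_{k-1},F_k)$, $k\ge1$, partition $\{1,2,3,\dots\}$ --- a step the paper's proof leaves implicit.
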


\begin{proof}
    Let $w\in\Sigma^*\setminus (\Sigma^* \1\1 \Sigma^* \cup \0\Sigma^*)$.
    If $w = \varepsilon$ then $\valF(w) = 0$.
    If $w \neq \varepsilon$ then $w\in\Sigma^* \1 \Sigma^*$ and $\valF(w) > 0$.
    Let $k \in \{1,2\}$. Then $|w| = k$ if and only if $w=p_k$
    where $p_k$ is the prefix of length $k$ of the word $\1\0$.
    Let $n\in\N$.
    Then $F_{k-1} \leq n < F_k $ if and only if $n=k=\valF(p_k)$.

    Let $k\geq 3$ and let it hold for all integers $\ell$ up to $k-1$
    that $|w| = \ell$ if and only if $F_{\ell - 1} \leq \valF(w) < F_\ell$.
    Let $w = w_{k-1} w_{k-2} \cdots w_0 \in \Sigma^*\setminus (\Sigma^* \1\1 \Sigma^* \cup \0\Sigma^*)$.
    Then applying induction hypothesis on the suffix
    $w_{q-1} \cdots w_1 w_0 \in \Sigma^*\setminus (\Sigma^* \1\1 \Sigma^* \cup \0\Sigma^*)$
    such that $w = \1\0^{k-q-1} w_{q-1}\cdots w_0$
    it holds that $F_{q-1} \leq \sum_{i=0}^{q-1} w_i F_i < F_{q}$.
    As $w_{k-1} = \1$ and $q \leq k-2$, we have
    \[
         F_{k-1} \leq \valF(w) = \sum_{i=0}^{k-1} w_i F_i = F_{k-1} + \sum_{i=0}^{q-1} w_i F_i
        < F_{k-1} + F_q \leq F_{k-1} + F_{k-2} = F_k.\qedhere
    \]
\end{proof}

\begin{lemma}\label{lem:Fibo-bijection}
    The map $\valF\colon \Sigma^*\setminus (\Sigma^* \1\1 \Sigma^* \cup \0\Sigma^*)\to \N$
    is a bijection.
\end{lemma}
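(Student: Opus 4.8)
The plan is to derive both injectivity and surjectivity of $\valF$ from Lemma~\ref{lem:Fibo-intervals}, which already determines $|w|$ from $\valF(w)$ for every $w$ in the language $\Sigma^*\setminus(\Sigma^*\1\1\Sigma^*\cup\0\Sigma^*)$.

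For injectivity, first observe that if $\valF(u)=\valF(v)$ with $u,v$ in the language, then Lemma~\ref{lem:Fibo-intervals} forces $|u|=|v|$: either the common value is $0$ and both lengths are $0$, or the common value $n\geq1$ lies in the single interval $[F_{k-1},F_k)$ for which $|u|=k=|v|$. It therefore suffices to prove by induction on $k$ that $\valF$ is injective on the length-$k$ words of the language. The cases $k\leq1$ are trivial (only $\varepsilon$, respectively only $\1$). For $k\geq2$, any such word $u$ begins with $\1\0$ — the first letter is $\1$ since $u\notin\0\Sigma^*$, and the second is $\0$ since $u$ has no factor $\1\1$ — so we may write $u=\1\0^{k-1-|u'|}u'$ where $u'$ is empty or starts with $\1$, and then $\valF(u)=F_{k-1}+\valF(u')$; write $v=\1\0^{k-1-|v'|}v'$ similarly. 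From $\valF(u)=\valF(v)$ we get $\valF(u')=\valF(v')$, hence $|u'|=|v'|<k$ by Lemma~\ref{lem:Fibo-intervals}, hence $u'=v'$ by the inductive hypothesis, and therefore $u=v$.

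For surjectivity, I would argue by strong induction on $n\in\N$. The case $n=0$ is witnessed by $\varepsilon$. For $n\geq1$, let $k\geq1$ be the unique integer with $F_{k-1}\leq n<F_k$ and put $m=n-F_{k-1}$, so $0\leq m<F_k-F_{k-1}=F_{k-2}$ (with $F_{-1}=1$ when $k=1$, which forces $m=0$). Since $m<n$, the inductive hypothesis gives a word $w'$ in the language with $\valF(w')=m$; when $m\geq1$ its length $\ell$ satisfies $F_{\ell-1}\leq m<F_{k-2}$ by Lemma~\ref{lem:Fibo-intervals}, so $\ell\leq k-2$ (and $w'=\varepsilon$, $\ell=0\leq k-2$, when $m=0$). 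Then $w=\1\0^{k-1-\ell}w'$ has length $k$, belongs to the language (it starts with $\1$, and since $\ell\leq k-2$ the leading $\1$ is followed by at least one $\0$, so no factor $\1\1$ is created), and $\valF(w)=F_{k-1}+\valF(w')=n$.

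I expect no genuine obstacle here; the whole argument rests on Lemma~\ref{lem:Fibo-intervals} together with elementary manipulations. The one place demanding care is the bookkeeping of leading $\0$'s — both when peeling the top $\1$ off a word in the injectivity step and when prepending $\1\0^{\cdots}$ in the surjectivity step — so that the resulting words stay inside $\Sigma^*\setminus(\Sigma^*\1\1\Sigma^*\cup\0\Sigma^*)$ and the index shifts used to split $\valF$ are tracked correctly.
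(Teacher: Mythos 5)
Your proof is correct, and both halves rest, as in the paper, on Lemma~\ref{lem:Fibo-intervals}. The surjectivity argument is essentially identical to the paper's: subtract the leading Fibonacci number $F_{k-1}$, invoke the induction hypothesis on $n-F_{k-1}<F_{k-2}$, and prepend $\1\0^{k-1-\ell}$. For injectivity you take a genuinely different route: the paper fixes the longest common prefix $p$ of $u$ and $v$, writes $u=p\0 s$ and $v=p\1 t$, and derives the contradiction $\valF(v)-\valF(u)>0$ in one step from Lemma~\ref{lem:Fibo-intervals}; you instead induct on the common length $k$, peel off the leading $\1\0^{\cdots}$ block to reduce to shorter words $u',v'$ of equal value and equal length, and conclude $u'=v'$ from the inductive hypothesis. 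Both are equally elementary; the paper's version avoids a second induction, while yours reuses the same ``strip the top term'' decomposition in both halves, which is arguably more uniform. One cosmetic slip on your side: in the surjectivity step you assert $\ell=0\leq k-2$ when $m=0$, which fails for $k=1$ (there $n=1$, $w'=\varepsilon$, $\ell=0$, $k-2=-1$); the construction $w=\1\0^{k-1-\ell}w'=\1$ is still valid because $w'$ is empty, so nothing breaks --- and the paper's own proof has the same degenerate-case looseness in its claim ``Clearly $\ell\geq1$''.
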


\begin{proof}
    (Injectivity): 
    Let $u,v\in\Sigma^*\setminus (\Sigma^* \1\1 \Sigma^* \cup \0\Sigma^*)$ be such that
    $\valF(u) = \valF(v)$. Assume by contradiction that $u \neq v$.
    By Lemma~\ref{lem:Fibo-intervals}, $|u| = |v|$.
    Let $p\in\Sigma^*$
    be the longest common prefix of $u$ and $v$. 
    Therefore there exist $\ell\in\N$ and
    $t,s \in \Sigma^\ell$ such that without loss of generality
    $u = p \0 s$ and $v = p \1 t$.
    Then by Lemma~\ref{lem:Fibo-intervals}, 
    \[
        \valF(v) - \valF(u) = \valF(\1t) - \valF(\0s) 
        > F_{\ell} - F_{\ell-1} = F_{\ell - 2}
        \geq 0,
    \]
    which is a contradiction.

    (Surjectivity): 
    Let $S = \Sigma^*\setminus (\Sigma^* \1\1 \Sigma^* \cup \0\Sigma^*)$.
    We proceed by induction on $n\in\N$.
    If $n = 0$ then $w = \varepsilon\in S$ fulfills  $\valF(w) = n$.
    Let $n \geq 1$. Induction hypothesis: for every $m\in\N$
    such that $m < n$ there exists 
    $w \in S$
    such that $\valF(w) = m$.
    There exists a unique $k\in\N$ such that $F_{k-1} \leq n < F_k$.
    We have $0 \leq n - F_{k-1} < F_k - F_{k-1} = F_{k-2}$.
    By Lemma~\ref{lem:Fibo-intervals}, $|w| \leq k-2$.
    By induction hypothesis, there exists 
    $w \in S$
    such that $\valF(w) = n - F_{k-1}$.
    Let $\ell = k - 1 - |w|$ and $u = \1\0^\ell w$.
    Clearly $\ell \geq 1$. Thus $u\in S$ and
    \[
        \valF(u) = \valF(\1\0^\ell w) = F_{k-1} + \valF(w)
        = F_{k-1} + n - F_{k-1} = n.\qedhere
    \]
\end{proof}

The map $\repF$ is defined as the inverse map $\valF^{-1}$.
It corresponds to the Zeckendorf numeration system \cite{MR308032}.

We can now show that $\valF$ is an increasing map.
Recall that the radix order is a total order on $\Sigma^*$ such that,
for every $u,v\in\Sigma^*$,
$u <_{\rad}
v$ if and only if $|u| < |v|$ or $|u| = |v|$ and $u <_{lex} v$.

\begin{lemma}\label{lem:valF_is_increasing}
    Let $S= \Sigma^*\setminus
	\left( \Sigma^*\1\1\Sigma^* \cup \0\Sigma^*\right)$.
	The map $\valF$
    is an increasing bijection from 
    $(S,<_{\rad})$ to $(\N,<)$.
\end{lemma}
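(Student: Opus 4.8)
The plan is to combine the already-established bijection (Lemma~\ref{lem:Fibo-bijection}) with the interval characterization (Lemma~\ref{lem:Fibo-intervals}) to upgrade the bijection to an order isomorphism. Since $\valF\colon S\to\N$ is a bijection and both $(S,<_{rad})$ and $(\N,<)$ are total orders, it suffices to prove that $\valF$ is \emph{monotone}: for every $u,v\in S$ with $u<_{rad}v$ we have $\valF(u)<\valF(v)$. (Indeed, a monotone bijection between total orders is automatically an increasing bijection, and its inverse $\repF$ is then also increasing.)

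First I would dispose of the case where $u$ and $v$ have different lengths. If $u<_{rad}v$ because $|u|<|v|$, write $|u|=j$ and $|v|=k$ with $j<k$. If $j=0$ then $u=\varepsilon$, $\valF(u)=0$, and since $v\neq\varepsilon$ we have $v\in\Sigma^*\1\Sigma^*$, hence $\valF(v)>0=\valF(u)$. If $j\geq1$, Lemma~\ref{lem:Fibo-intervals} gives $\valF(u)<F_j\leq F_{k-1}\leq\valF(v)$, using that $(F_n)$ is increasing and $j\leq k-1$. So in all cases $\valF(u)<\valF(v)$.

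Next, the equal-length case. Suppose $|u|=|v|=k$ and $u<_{lex}v$. Let $p\in\Sigma^*$ be the longest common prefix of $u$ and $v$; since $u\neq v$ and $u<_{lex}v$, there are $\ell\geq0$ and $s,t\in\Sigma^\ell$ with $u=p\,\0\,s$ and $v=p\,\1\,t$. This is exactly the configuration already analyzed inside the injectivity argument of Lemma~\ref{lem:Fibo-bijection}: the suffixes $\0s$ and $\1t$ lie in $\Sigma^*\setminus(\Sigma^*\1\1\Sigma^*\cup\0\Sigma^*)$ after stripping the possible leading zero (more precisely, $\1t\in\Sigma^*\setminus(\Sigma^*\1\1\Sigma^*\cup\0\Sigma^*)$ and $\valF(\0s)=\valF(s')$ for the appropriate $s'$ of length $\le\ell-1$), so Lemma~\ref{lem:Fibo-intervals} yields
\[
\valF(\1t)\ \geq\ F_\ell
\qquad\text{and}\qquad
\valF(\0s)=\valF(s')\ <\ F_{\ell-1}\ \le\ F_\ell .
\]
Because $\valF$ is a weighted sum of Fibonacci numbers, $\valF(v)-\valF(u)=\valF(\1t)-\valF(\0s)>F_\ell-F_{\ell-1}=F_{\ell-2}\geq0$, where $F_{\ell-2}\ge0$ holds for $\ell\ge2$, and the cases $\ell\in\{0,1\}$ are checked directly ($\valF(\1t)-\valF(\0s)\ge F_0-0>0$). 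Hence $\valF(u)<\valF(v)$.

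Putting the two cases together gives monotonicity of $\valF$ on $(S,<_{rad})$, and since it is a bijection onto $(\N,<)$ by Lemma~\ref{lem:Fibo-bijection}, it is an increasing bijection; the same then holds for $\repF=\valF^{-1}$. The main obstacle, such as it is, is purely bookkeeping: making sure the suffixes $\0s$ and $\1t$ are brought into the normalized language $\Sigma^*\setminus(\Sigma^*\1\1\Sigma^*\cup\0\Sigma^*)$ so that Lemma~\ref{lem:Fibo-intervals} applies with the correct index, and handling the small-$\ell$ edge cases where $F_{\ell-2}$ could be $F_{-1}=1$ or $F_{-2}=0$. Everything else is a direct appeal to the lemmas already proved above.
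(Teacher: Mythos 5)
Your proof follows the paper's argument essentially verbatim: split on whether $|u|<|v|$ (apply Lemma~\ref{lem:Fibo-intervals}) or $|u|=|v|$ (take the longest common prefix, write $u=p\0s$, $v=p\1t$, and compare $\valF(\1t)$ with $\valF(\0s)$ via Lemma~\ref{lem:Fibo-intervals}), so the approach is the same and the conclusion is reached. One intermediate claim is, however, false as stated: the normalized form $s'$ of $\0s$ need not have length $\le\ell-1$, since $s$ may begin with $\1$ (for instance $u=\1\0\0\1$, $v=\1\0\1\0$ gives $s=\1$, $\ell=1$ and $\valF(\0s)=1=F_{\ell-1}$, not $<F_{\ell-1}$), so the bound $\valF(\0s)<F_{\ell-1}$ and the displayed inequality $\valF(v)-\valF(u)>F_\ell-F_{\ell-1}$ are not justified. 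The argument survives because the correct bound $\valF(\0s)=\valF(s')<F_{|s'|}\le F_\ell\le\valF(\1t)$ already yields $\valF(v)-\valF(u)>0$, which is exactly the estimate the paper uses.
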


\begin{proof}
    It follows from Lemma~\ref{lem:Fibo-bijection} that 
    $\valF$ is a bijection. We show that it is increasing.

    Let $u,v \in \Sigma^*\setminus
	\left( \Sigma^*\1\1\Sigma^* \cup \0\Sigma^*\right)$ be such that $u <_{\rad} v$.
    If $|u| < |v|$ then by Lemma~\ref{lem:Fibo-intervals},
    $\valF(u) < \valF(v)$. Assume that $|u| = |v|$.
    Then $u <_{lex} v$.
    Then there exists
    $p\in \Sigma^*$ such that
    $u = p\0 s$, $v = p\1 t$ for some
    $t,s\in \Sigma^*\setminus \Sigma^* \1\1 \Sigma^*$.
    Denote $\ell = |t|=|s|$. 
    Thus from Lemma~\ref{lem:Fibo-intervals},
    \[
        \valF(v) - \valF(u) 
        =\valF(p\1 t) - \valF(p\0 s) 
        = \valF(\1 t) - \valF(\0 s) >
        F_\ell - F_{\ell}
        = 0. \qedhere
    \]
\end{proof}

\subsection{Proof of Proposition~\ref{prop:Frep}}

The following lemma
allows us to determine whether $\valFc(w)$ 
is nonnegative or negative based only on the first digit of $w$. 

\begin{lemma}\label{lem:intervals-of-values}
    For every word
    $w\in\Sigma^{+}\setminus \Sigma^*\1\1\Sigma^*$
    such that $|w|=k$.
    We have
    \begin{enumerate}
        \item $w\in\0\Sigma^*$ if and only if $0\leq\valFc(w)<F_{k-1}$,
        \item $w\in\1\Sigma^*$ if and only if $-F_{k-2}\leq\valFc(w)< 0$.
    \end{enumerate}
\end{lemma}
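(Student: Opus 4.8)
The plan is to reduce the statement to Lemma~\ref{lem:Fibo-intervals} by means of the identity $\valFc(w)=\valF(w)-w_{k-1}F_k$ recorded in Equation~\eqref{eq:valFc-vs-valF}, and then to obtain the two ``if and only if'' claims from the corresponding forward implications together with the elementary observations that every $w\in\Sigma^+$ begins with $\0$ or $\1$ and that the two target intervals $[0,F_{k-1})$ and $[-F_{k-2},0)$ are disjoint.

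First I would prove the forward implication of (2). If $w\in\1\Sigma^*\setminus\Sigma^*\1\1\Sigma^*$ with $|w|=k$, then $w$ does not begin with $\0$ and contains no factor $\1\1$, so Lemma~\ref{lem:Fibo-intervals} applies verbatim and gives $F_{k-1}\leq\valF(w)<F_k$. Since $w_{k-1}=\1$, Equation~\eqref{eq:valFc-vs-valF} yields $\valFc(w)=\valF(w)-F_k$, hence $\valFc(w)<0$ and $\valFc(w)\geq F_{k-1}-F_k=-F_{k-2}$, which is exactly $-F_{k-2}\leq\valFc(w)<0$. Next I would prove the forward implication of (1). If $w\in\0\Sigma^*\setminus\Sigma^*\1\1\Sigma^*$ with $|w|=k$, I write $w=\0^j v$ with $j\geq1$ and $v\in\Sigma^*\setminus(\0\Sigma^*\cup\Sigma^*\1\1\Sigma^*)$. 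Leading zeroes do not change the $\valF$-value, so $\valF(w)=\valF(v)$, and Lemma~\ref{lem:Fibo-intervals} applied to $v$ (distinguishing $v=\varepsilon$, where $\valF(v)=0$, from $v\neq\varepsilon$, where $|v|\leq k-1$ so $\valF(v)<F_{|v|}\leq F_{k-1}$) gives $0\leq\valF(w)<F_{k-1}$; since $w_{k-1}=\0$ we have $\valFc(w)=\valF(w)$, which is the claimed bound.

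For the converses, suppose $0\leq\valFc(w)<F_{k-1}$: then $w$ cannot begin with $\1$, since by the forward part of (2) this would force $\valFc(w)<0$; hence $w\in\0\Sigma^*$, and symmetrically $-F_{k-2}\leq\valFc(w)<0$ forces $w\in\1\Sigma^*$. I do not expect a genuine obstacle here; the only points requiring care are the bookkeeping with the conventions $F_0=1$ and $F_{-1}=1$ in the base cases $k=1$ (where $w=\0$ gives $0\leq 0<F_0$ and $w=\1$ gives $-F_{-1}\leq-1<0$), and the degenerate case of the all-zero word in (1).
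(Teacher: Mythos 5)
Your proof is correct and follows essentially the same route as the paper's: both reduce to Lemma~\ref{lem:Fibo-intervals} via Equation~\eqref{eq:valFc-vs-valF} for the forward implications and obtain the converses from the disjointness of the two intervals. The only (harmless) difference is in part (2), where you apply Lemma~\ref{lem:Fibo-intervals} directly to $w$ (legitimate since $w\in\1\Sigma^*$ has no leading zero) to get both bounds at once, whereas the paper obtains the upper bound by applying part (1) to $\0w$; you also make the all-zeros case of part (1) explicit, which the paper glosses over.
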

\begin{proof}
	Let 
	$w = w_{k-1} w_{k-2} \cdots w_0 \in \Sigma^{+}\setminus \Sigma^*\1\1\Sigma^*$
    for some $k\geq 1$.
    First we prove the implication from left to right for both cases.
    \begin{enumerate}
    \item Let $\ell \leq k-2$ be the unique integer such that
        $w = v\1 s$ for some $v\in\0^+$ and $s\in\Sigma^{\ell}$.
        Then $0\leq \valFc(w) = \valF(w) = \valF(\1 s) < F_{\ell+1} \leq F_{k-1}$
        by Lemma~\ref{lem:Fibo-intervals}.
    \item 
        Using Equation~\eqref{eq:valFc-vs-valF}
        and part (1), we have
        $-F_{k-2}\leq -F_{k-2} + \sum_{i=0}^{k-2}w_i F_i 
        = \valFc(w) 
        = \valF(w)-F_{k} 
        = \valFc(\0w)-F_{k}
        < F_{k} - F_{k} = 0$.
    \end{enumerate}
    The converse follows from the following observation. 
    Let $X = X_1 \cup X_2$ be such that $X_1 \cap X_2 = \emptyset$.
    Let $F\colon X\to Y$
    be a map, such that $F(X_1) \cap F(X_2) = \emptyset$.
    Then it holds that, for every $x\in X$, $F(x)\in F(X_1)$ implies $x\in X_1$.
    Indeed, if $x\in X_2$, then $F(x) \in F(X_1) \cap F(X_2)$, a contradiction.
    It suffices to use this observation with $F = \valF$, $X_1 = \0\Sigma^*$,
    $X_2 = \1\Sigma^*$.
\end{proof}

Recall that $D = \Sigma(\Sigma\Sigma)^*\setminus
\left( \Sigma^*\1\1\Sigma^* \cup \0\0\0\Sigma^* \cup \1\0\1\Sigma^*\right)$
is the set of odd-length words $w\in D$ with no consecutive $\1$s and which
are not starting with a neutral prefix.
The following lemma strengthens Lemma~\ref{lem:intervals-of-values}
for $w\in D$.

\begin{lemma}\label{lem:excercise_on_Fibo}
    Let $w\in D$ and $k \in\N$. 
    \begin{enumerate}
        \item $w\in\0\Sigma^{2k}$
            if and only if $F_{2k-2}\leq\valFc(w)<F_{2k}$.
        \item $w\in\1\Sigma^{2k}$
            if and only if $-F_{2k-1}\leq\valFc(w)<-F_{2k-3}$.
        \item Also, $w=\1$ if and only if $\valFc(w)=-1$.
    \end{enumerate}
\end{lemma}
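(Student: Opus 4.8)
The plan is to establish the three ``only if'' implications by direct computation with the value map $\valFc$, and then to deduce all three ``if'' implications at once from the disjointness observation already used in the proof of Lemma~\ref{lem:intervals-of-values}.

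First, part (1). Suppose $w\in\0\Sigma^{2k}\cap D$. If $k=0$, then $w=\0$ and $\valFc(w)=0\in[F_{-2},F_0)=[0,1)$. If $k\ge1$, write $w=\0v$ with $v=w_{2k-1}\cdots w_0\in\Sigma^{2k}$; since the leading digit is $\0$ we have $\valFc(w)=\valF(w)=\valF(v)$, and the upper bound $\valFc(w)<F_{2k}$ is Lemma~\ref{lem:intervals-of-values}(1) applied with $|w|=2k+1$. For the lower bound, the hypothesis $w\notin\0\0\0\Sigma^*$ means $v$ does not begin with $\0\0$, so either $v_{2k-1}=\1$, or $v_{2k-1}=\0$ and $v_{2k-2}=\1$, and in both cases $\valF(v)\ge F_{2k-2}$. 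Part (3) is immediate: $\valFc(\1)=F_0-F_1=1-2=-1$ by Equation~\eqref{eq:val-Fcal}.

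Part (2) is where the refinement over Lemma~\ref{lem:intervals-of-values} really comes from. Suppose $w\in\1\Sigma^{2k}\cap D$ with $k\ge1$. Because $w$ has no factor $\1\1$, its second letter is $\0$, and because $w\notin\1\0\1\Sigma^*$, its third letter is $\0$ as well (the third letter exists since $|w|=2k+1\ge3$); thus $w=\1\0\0s$ for some $s\in\Sigma^{2k-2}$ with no factor $\1\1$. By Equation~\eqref{eq:valFc-vs-valF}, $\valFc(w)=\valF(w)-F_{2k+1}=\bigl(F_{2k}+\valF(s)\bigr)-F_{2k+1}=\valF(s)-F_{2k-1}$, and stripping leading $\0$'s from $s$ and invoking Lemma~\ref{lem:Fibo-intervals} gives $0\le\valF(s)<F_{2k-2}$ (which also holds when $k=1$, where $s=\varepsilon$). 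Hence $-F_{2k-1}\le\valFc(w)<F_{2k-2}-F_{2k-1}=-F_{2k-3}$. Upgrading the crude bound $\valFc(w)<0$ to $\valFc(w)<-F_{2k-3}$ using the exclusion of the neutral prefix $\1\0\1$ is the one slightly delicate step; the rest is routine Fibonacci bookkeeping, so I do not anticipate a genuine obstacle.

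Finally, the ``if'' implications follow formally. The set $D$ is the disjoint union of the families $\0\Sigma^{2k}\cap D$ ($k\ge0$), $\1\Sigma^{2k}\cap D$ ($k\ge1$), and $\{\1\}$, because every word of $D$ has odd length $2k+1$ and is classified according to its first letter, $\1$ being the only word with first letter $\1$ and $k=0$. The ``only if'' implications just proven show that $\valFc$ maps these three families into the sets $[F_{2k-2},F_{2k})$, $[-F_{2k-1},-F_{2k-3})$, and $\{-1\}$, respectively, and using $F_{2k}=F_{2(k+1)-2}$ and $-F_{2k-3}=-F_{2(k-1)-1}$ one checks that these sets are pairwise disjoint (they in fact partition $\Z$). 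Hence the images of the three families under $\valFc$ are pairwise disjoint, so by the observation recalled in the proof of Lemma~\ref{lem:intervals-of-values}, a word $w\in D$ with $\valFc(w)\in[F_{2k-2},F_{2k})$ must lie in $\0\Sigma^{2k}$, a word $w\in D$ with $\valFc(w)\in[-F_{2k-1},-F_{2k-3})$ must lie in $\1\Sigma^{2k}$, and a word $w\in D$ with $\valFc(w)=-1$ must equal $\1$. This is exactly the content of the three ``if'' implications.
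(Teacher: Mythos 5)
Your proof is correct and follows essentially the same route as the paper's: establish the forward bounds from Equation~\eqref{eq:valFc-vs-valF} and Lemma~\ref{lem:Fibo-intervals}, sharpening them via the exclusion of the neutral prefixes $\0\0\0$ and $\1\0\1$, and then deduce the converses from the pairwise disjointness of the resulting intervals. Your decomposition $w=\1\0\0s$ in part (2) is in fact marginally cleaner than the paper's $w=\1v\1s$ with $v\in\0^+$ (which tacitly assumes a second $\1$ occurs in $w$), but the substance of the argument is identical.
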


\begin{proof}
    Let $w\in D$ and $k \in\N$. 

    (1): Let $w\in\0\Sigma^*$ and $|w| = 2k+1$. Then $w =v \1 s$ for some $v\in\{\0,\0\0\}$
    and $s\in \Sigma^*$.
    Applying Equation~\eqref{eq:valFc-vs-valF} 
    we have 
    $\valFc(w)
    = \valF(v\1 s)
    = \valF(\1 s)$. 
    If $v = \0$ then $|s| = 2k-1$ and 
    from Lemma~\ref{lem:Fibo-intervals},
    $F_{2k-1} \leq  \valF(\1 s)  < F_{2k}$.
    If $v = \0\0$ then $|s| = 2k-2$ and 
    from Lemma~\ref{lem:Fibo-intervals},
    $F_{2k-2} \leq  \valF(\1 s)  < F_{2k-1}$.
    Together, $F_{2k-2}\leq\valFc(w)<F_{2k}$.

    On the other hand, let $w\in D$ be such that
    $F_{2k-2} \leq\valFc(w)< F_{2k}$. 
    By Lemma~\ref{lem:intervals-of-values},
    $w\in\0 \Sigma^{2\ell}$ for some $\ell \leq k$.
    From the first part of this proof, we have $k = \ell$ and $w \in \0\Sigma^{2k}$.

    (2): Let $w\in\1\Sigma^*$ and $|w| = 2k+1$.
    From Lemma~\ref{lem:Fibo-intervals},
    $F_{2k}\leq \valF(w)$.
    From Equation~\eqref{eq:valFc-vs-valF},
    \[
        \valFc(w) 
        = \valF(w) - F_{2k+1}
        \geq F_{2k} - F_{2k+1}
        = -F_{2k-1}.
    \]
    Let $s\in\Sigma^*$ be the suffix of length $\ell \in\N$ 
    such that $w=\1v \1 s$ for some $v\in\0^+$.
    Clearly $\ell \leq 2k-3$.
    From Lemma~\ref{lem:Fibo-intervals},
    $\valF(\1 s) < F_{\ell+1}$.
    Thus
    \[
        \valFc(w) 
        = \valFc(\1v\1 s) = \valF(\0 v\1 s)-F_{2k-1}
        < F_{\ell +1}- F_{2k-1} \leq F_{2k-2}- F_{2k-1} = -F_{2k-3}.
    \]

    On the other hand, let $w\in D$ be such that
    $-F_{2k-1} \leq\valFc(w)< -F_{2k-3}$. By Lemma~\ref{lem:intervals-of-values},
    $w\in\1 \Sigma^{\ell}$
    for some $\ell \leq 2k$.
    From the first part of this proof, we have $\ell = 2k$ and $w \in \1\Sigma^{2k}$.

    (3): If $w = \1$ then $\valFc(w) = -1$.
    The converse follows from previous parts (1) and (2).
\end{proof}

We can now prove that $\valFc$ is a bijection $D\to\Z$.

\begin{proof}[Proof of Proposition~\ref{prop:Frep}]
    (Injectivity): Let $u,v\in D$
    be such that $\valFc(u) = \valFc(v)$.
    By Lemma~\ref{lem:excercise_on_Fibo}, 
    $u$ and $v$ have the same length $k\in\N$
    and the same first digit $\mathtt{d}\in\{\0,\1\}$. 
    Using Equation~\eqref{eq:valFc-vs-valF},
    $\valF(u) = \valFc(u) + \mathtt{d} F_k = \valFc(v) + \mathtt{d} F_k = \valF(v)$.
    By Lemma~\ref{lem:Fibo-bijection}, $u = v$.

    (Surjectivity): It holds that
    $\Z = \{-1\}\cup \bigcup_{k = 0}^{+\infty} [F_{2k-2}, F_{2k})
    \cup [-F_{2k-1}, -F_{2k-3})$,
    a disjoint union.
    If $n = -1$ then $w = \1$ fulfills $\valFc(w) = n$.
    Let $n\in\Z\setminus\{-1\}$. 

    Assume $n \geq 0$. Then there exists a unique $k\in\N$ 
    such that $F_{2k-2} \leq n < F_{2k}$. 
    From Lemma~\ref{lem:Fibo-bijection}, there exists a unique $w\in
    \Sigma^*\setminus
    (\Sigma^*\1\1\Sigma^* \cup \0\Sigma^*)$ such that $\valF(w) = n$.
    If $n < F_{2k-1}$, then by Lemma~\ref{lem:Fibo-intervals},
    $|w| = 2k-1$ and the word $\0\0 w \in D$
    fulfills $\valFc(\0\0w) = \valF(\0\0w) = \valF(w) = n$.
    If $F_{2k-1} \leq n$, then by Lemma~\ref{lem:Fibo-intervals},
    $|w| = 2k$ and the word $\0w \in D$
    fulfills $\valFc(\0w) = \valF(\0w) = \valF(w) = n$.

    Assume $n < -1$. Then there exists a unique integer $k\geq1$ 
    such that $-F_{2k-1} \leq n < -F_{2k-3}$.
    From Lemma~\ref{lem:Fibo-bijection},
    there exists $w\in\Sigma^*\setminus
    (\Sigma^*\1\1\Sigma^* \cup \0\Sigma^*)$ such that $\valF(w) =F_{2k-1} + n$.
    Let $\ell = 2k - |w|$. Let $u = \1\0^\ell w$. Then $|u| = 2k + 1$ and thus
    \[
        \begin{aligned}
        \valFc(u) 
            &= -F_{2k+1} + \valF(\1\0^\ell w) \\
            &= -F_{2k+1} +F_{2k} + \valF(w) \\
            &= -F_{2k+1} +F_{2k} + F_{2k-1} + n = n. \qedhere
        \end{aligned}
    \]
\end{proof}

\subsection{Ordering in the Fibonacci complement numeration system}

We define the reversed-radix order as a total order such that $u <_{\rev} v$ if and only if
    $|u| > |v|$ or $|u| = |v|$ and $u <_{lex} v$.
We define a total order on $\Sigma^*$ as follows.

\begin{definition}[total order $\prec$]\label{def:order}
    For every $u,v \in \Sigma^*$,
	we define $u \prec v$ if and only if
	\begin{itemize}
		\item $u\in \1 \Sigma^*$ and $v\in \0 \Sigma^*$, or
        \item $u,v \in \0 \Sigma^*$ and $u <_{\rad} v$, or
        \item $u,v \in \1 \Sigma^*$ and $u <_{\rev} v$.
	\end{itemize}
\end{definition}

The map $\valFc$ is an increasing bijection with respect to this total order on
$D$.

\begin{lemma}\label{lem:valFc_is_increasing}
	The map $\valFc$
    is an increasing bijection from $(D,\prec)$ to $(\Z,<)$.
\end{lemma}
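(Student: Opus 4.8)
The plan is to combine the bijectivity of $\valFc$ (Proposition~\ref{prop:Frep}) with an order-preservation argument that splits into three cases mirroring the three clauses of Definition~\ref{def:order}. Since Proposition~\ref{prop:Frep} already gives that $\valFc\colon D\to\Z$ is a bijection, it remains only to verify monotonicity: for $u,v\in D$ with $u\prec v$, we must show $\valFc(u)<\valFc(v)$.

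\textbf{Case 1: $u\in\1\Sigma^*$ and $v\in\0\Sigma^*$.} Here Lemma~\ref{lem:excercise_on_Fibo} (or already Lemma~\ref{lem:intervals-of-values}) does all the work: $\valFc(u)<0\leq\valFc(v)$, so $\valFc(u)<\valFc(v)$ immediately.

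\textbf{Case 2: $u,v\in\0\Sigma^*$ and $u<_{rad}v$.} Write $u=\0\0\dots$ or $u=\0\dots$ according to parity; more precisely, since $u,v\in D$ both have odd length, write $u=\0 u'$ and $v=\0 v'$ with $u',v'$ of even length. By Equation~\eqref{eq:valFc-vs-valF}, $\valFc(\0w)=\valF(\0w)=\valF(w)$ for any $w\in\Sigma^+$, so $\valFc(u)=\valF(u)$ and $\valFc(v)=\valF(v)$. Now I must translate $u<_{rad}v$ (for the words in $D$) into $\valF(u)<\valF(v)$. The subtlety is that $<_{rad}$ on $D$ is \emph{not} the same as $<_{rad}$ on the Zeckendorf language $S$ of Lemma~\ref{lem:valF_is_increasing}, because elements of $D$ carry a leading $\0$ (or $\0\0$) neutral prefix. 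But stripping the neutral prefix is value-preserving and length-monotone in the right way: if $|u|<|v|$ then by Lemma~\ref{lem:excercise_on_Fibo}(1) the values lie in disjoint intervals $[F_{2k-2},F_{2k})$ with larger $k$ for $v$, forcing $\valFc(u)<\valFc(v)$; and if $|u|=|v|$ then $u<_{lex}v$ forces $u'<_{lex}v'$, so there is a common prefix $p$ with $u'=p\0s$, $v'=p\1t$, and the estimate $\valF(\1t)-\valF(\0s)>F_{|t|}-F_{|t|}=0$ from Lemma~\ref{lem:Fibo-intervals} (exactly as in the proof of Lemma~\ref{lem:valF_is_increasing}) gives $\valF(u)<\valF(v)$.

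\textbf{Case 3: $u,v\in\1\Sigma^*$ and $u<_{rev}v$.} Now $\valFc(u)=\valF(u)-F_{|u|}$ and $\valFc(v)=\valF(v)-F_{|v|}$ by Equation~\eqref{eq:valFc-vs-valF}. If $|u|>|v|$, then by Lemma~\ref{lem:excercise_on_Fibo}(2) the values lie in disjoint intervals $[-F_{2k-1},-F_{2k-3})$, and the interval for $u$ (larger $k$) lies entirely below the interval for $v$, so $\valFc(u)<\valFc(v)$; the boundary case $v=\1$, handled by Lemma~\ref{lem:excercise_on_Fibo}(3), fits this pattern since then $\valFc(v)=-1$ is the largest negative value. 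If $|u|=|v|=:2k+1$, then $u<_{lex}v$, so there is a common prefix $p$ with $u=p\0s$, $v=p\1t$ where $|s|=|t|$, and since the leading letters of $u,v$ agree (both $\1$) the difference of values is $\valFc(v)-\valFc(u)=\valF(v)-\valF(u)=\valF(p\1t)-\valF(p\0s)=\valF(\1t)-\valF(\0s)>0$ again by Lemma~\ref{lem:Fibo-intervals}.

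\textbf{Main obstacle.} The only genuinely delicate point is the bookkeeping in Case~2 and Case~3 around the neutral prefix: an element of $D$ is a Zeckendorf word preceded by either $\0$ or $\0\0$ (resp. $\1$ or $\1\0$), so the bijection $D\to\Z$ does not restrict to a simple length-shift, and one must check that the interval decomposition of $\Z$ from Lemma~\ref{lem:excercise_on_Fibo} is genuinely monotone in the length parameter $k$ — i.e.\ that $[F_{2k-2},F_{2k})$ and the negative intervals are ordered consistently with $<_{rad}$ on positive lengths and $<_{rev}$ on negative ones. This is exactly why the orders $<_{rad}$ and $<_{rev}$ were chosen as they were, and once the interval picture of Lemma~\ref{lem:excercise_on_Fibo} is in hand the verification is routine. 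I would present Cases~2 and~3 in parallel, factoring out the common-prefix estimate $\valF(\1t)>\valF(\0s)$ (for $|s|=|t|$) as the single reusable ingredient.
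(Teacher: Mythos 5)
Your proposal is correct and follows essentially the same route as the paper: bijectivity from Proposition~\ref{prop:Frep}, the sign/interval decomposition of Lemmas~\ref{lem:intervals-of-values} and~\ref{lem:excercise_on_Fibo} for the cases where $u$ and $v$ differ in first digit or in length, and the common-prefix estimate (the content of Lemma~\ref{lem:valF_is_increasing}) together with $\valFc(v)-\valFc(u)=\valF(v)-\valF(u)$ for equal first digit and equal length. The only cosmetic difference is that the paper merges your two equal-length subcases into a single fourth case.
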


\begin{proof}
	Let $u,v\in D$ 
    be such that $u\prec v$.
    Let $k,\ell\in\N$ be such that $|u|=2k+1$ and $|v|=2\ell+1$.
    \begin{itemize}
        \item
    If $u\in \1 \Sigma^*$ and $v\in \0 \Sigma^*$, then
    from Lemma~\ref{lem:intervals-of-values} 
    $\valFc(u)<0\leq\valFc(v)$.
        \item
    Assume that $u,v \in \0 \Sigma^*$ and $|u| < |v|$. 
    Thus $k\leq \ell-1$.
    From Lemma~\ref{lem:excercise_on_Fibo}, we have
    $\valFc(u)<F_{2k}\leq F_{2(\ell-1)}\leq\valFc(v)$.
        \item
    Assume that $u,v \in \1 \Sigma^*$ and $|u| > |v|$. 
    Thus $k-1\geq \ell$.
    From Lemma~\ref{lem:excercise_on_Fibo}, we have
    $\valFc(u)<-F_{2k-3}=-F_{2(k-1)-1}\leq -F_{2\ell-1}\leq\valFc(v)$.
    \item Assume that $u,v \in \mathtt{d}\Sigma^*$ 
        for some $\mathtt{d}\in\{\0,\1\}$ and $|u| = |v|$. 
        In this case, we have $u<_{lex}v$.
        From Lemma~\ref{lem:valF_is_increasing}, 
        $\valF(u)<\valF(v)$.
        Since $u$ and $v$ start with the same digit we have 
        $\valFc(v)-\valFc(u)=\valF(v)-\valF(u)>0$.
        Thus $\valFc(u)<\valFc(v)$.\qedhere
    \end{itemize}
\end{proof}

We can now prove that $\repFc$ is characterized by the fact of being
an increasing bijection.

\begin{maintheorem}\label{thm:repF_is_increasing}
	Let $f:\Z\to D$ where $D=\Sigma(\Sigma\Sigma)^*\setminus
	\left( \Sigma^*\1\1\Sigma^* \cup \0\0\0\Sigma^* \cup \1\0\1\Sigma^*\right)$.
    The map $f$
    is an increasing bijection from 
    $(\Z,<)$ to $(D,\prec)$ such that $f(0)=\0$
    if and only if $f=\repFc$.
\end{maintheorem}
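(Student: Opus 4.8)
The plan is to reduce the statement to the fact, already established in Lemma~\ref{lem:valFc_is_increasing}, that $\valFc$ is an increasing bijection from $(D,\prec)$ to $(\Z,<)$, together with a rigidity observation about increasing bijections of $\Z$ onto itself.

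For the implication ``$f=\repFc$ $\Rightarrow$ $f$ is an increasing bijection with $f(0)=\0$'': by Definition~\ref{def:num-sys-Z} together with Proposition~\ref{prop:Frep}, the map $\repFc$ is precisely the inverse of the bijection $\valFc\colon D\to\Z$. Since the inverse of an increasing bijection between totally ordered sets is again an increasing bijection, Lemma~\ref{lem:valFc_is_increasing} immediately yields that $\repFc\colon(\Z,<)\to(D,\prec)$ is an increasing bijection. It then remains only to check $\repFc(0)=\0$, i.e.\ $\valFc(\0)=0$, which is immediate from Equation~\eqref{eq:val-Fcal} with $k=1$ and $w=\0$.

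For the converse, I would argue as follows. Suppose $f\colon(\Z,<)\to(D,\prec)$ is an increasing bijection with $f(0)=\0$, and set $g=\valFc\circ f\colon\Z\to\Z$. Composing $f$ with the increasing bijection $\valFc$ of Lemma~\ref{lem:valFc_is_increasing}, the map $g$ is an increasing bijection of $(\Z,<)$ onto itself. The key step — and essentially the only content of the proof — is to show that any such $g$ is the identity. First, strict monotonicity gives $g(n+1)\ge g(n)+1$ for every $n\in\Z$, because $g$ takes integer values. If this inequality were strict for some $n$, then the integer $g(n)+1$ would lie outside the image of $g$: every $m\le n$ satisfies $g(m)\le g(n)<g(n)+1$, and every $m\ge n+1$ satisfies $g(m)\ge g(n+1)>g(n)+1$, contradicting surjectivity of $g$. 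Hence $g(n+1)=g(n)+1$ for all $n$, so by induction $g(n)=n+g(0)$; since $g(0)=\valFc(f(0))=\valFc(\0)=0$, we get $g=\mathrm{id}_\Z$. Therefore $f=\valFc^{-1}=\repFc$, using Proposition~\ref{prop:Frep} and Definition~\ref{def:num-sys-Z}.

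I do not expect a serious obstacle here; the whole argument rests on the rigidity of the discrete order type of $\Z$. The two points requiring a little care are making the surjectivity argument precise (a gap in the ``successor chain'' of $g$ leaves an integer un-hit) and ensuring that the direction of the order $\prec$ is correctly transported through $\valFc$ — but the latter is exactly what Lemma~\ref{lem:valFc_is_increasing} supplies, so no extra work is needed.
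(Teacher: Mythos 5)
Your proof is correct and follows essentially the same route as the paper: both directions reduce to Lemma~\ref{lem:valFc_is_increasing} and the rigidity of increasing bijections of $(\Z,<)$. The only difference is that the paper merely asserts the uniqueness of an increasing bijection $(\Z,<)\to(D,\prec)$ sending $0$ to $\0$, whereas you actually prove it via the successor-chain argument for $g=\valFc\circ f$ --- a welcome completion of a step the paper leaves implicit.
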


\begin{proof}
    Let $D= \Sigma(\Sigma\Sigma)^*\setminus
	\left( \Sigma^*\1\1\Sigma^* \cup\0\0\0\Sigma^* \cup \1\0\1\Sigma^*\right)$.
    The map $\repFc\colon\Z\to D$ is the inverse of the map $\valFc$,
    which by Lemma~\ref{lem:valFc_is_increasing} is an increasing bijection $D \to \Z$
    with respect to the order $\prec$. Hence $\repFc\colon \Z\to D$
    is an increasing bijection with respect to the order~$\prec$.
    Moreover $\repFc(0)=\0$.

    Let $f\colon \Z\to D$ be an increasing bijection with respect to the order~$\prec$
    such that $f(0) = \0$.
    There is a unique increasing bijection 
    $(\Z,<)$ to $(D,\prec)$ 
    such that $0\mapsto \0$.
    Thus $f=\repFc$.
\end{proof}

Theorem~\ref{thm:repF_is_increasing} allows us to prove that $\repFc$, the
Fibonacci analogue of the two's complement, is the same as the Dumont-Thomas
numeration system associated to one of the two biinfinite
periodic points of the Fibonacci substitution
\cite{2302.14481}.

\section{A constructive proof of Theorem~\ref{thm:Berstel}}\label{sec:proof-Berstel-adder}

The Berstel adder $\Bcal$ was given in \cite{zbMATH03947643} without proof.
A proof of Theorem~\ref{thm:Berstel} was provided later in \cite[Corollary 4]{MR1705857}
based on the numeration system in the real base $\tau=\frac{1+\sqrt{5}}{2}$,
see also \cite{MR942576,MR1093759}.
Another proof follows from \cite[\S 2.3.2.3]{MR2742574} where it is proved that
normalization in the real base $\beta$ is computable by a finite automaton when
$\beta$ is a Pisot number.
Alternatively, the correctness of the Berstel adder can be proved with a
decision procedure; see \cite[Remark 2.1, p.~41]{MR3518158}.

The goal of this section is to provide a new proof of 
Theorem~\ref{thm:Berstel} based solely on linear algebra and
identities involving Fibonacci numbers. 
The proof is constructive. More precisely,
we show that there is a canonical way
of mapping a ternary word to a binary word having the same Fibonacci value,
provided the choice has been made for shorter words; see Lemma~\ref{lem:tree-structure}
and Proposition~\ref{prop:uniqueness-new}.
Then we introduce an equivalence
relation on ternary words (see Definition~\ref{def:equivalence}),
describing the behavior of the translation to binary words.
We show that there are finitely many equivalence classes,
which proves that the computation can be performed by a finite-state transducer.

First, we prove two identities on Fibonacci numbers.
These identities are not part of lists of well-known identities for Fibonacci
numbers
\cite{MR0245499,zbMATH03316160,MR1954396,MR1397498,MR3077152}.
Also, we discovered the second identity thanks to the existence of the sequence
\[
        0, 5, 39, 272, 1869, 12815, 87840, 602069, 4126647, 28284464, \dots
\]
referenced as sequence \href{https://oeis.org/A003482}{A003482} in OEIS
\cite{OEISA003482}, which was yet to be associated with
$(\sum_{i=0}^{2k-1}F_i^2)_{k\geq0}$.
Recall that, following the literature on numeration systems, 
we define the Fibonacci sequence
$(F_n)_{n\geq 0}$ 
with the recurrence relation
$F_{n} = F_{n-1} + F_{n-2}$, for all $n \geq 2$, and
with the initial conditions
$ F_0 = 1$ and $F_1 = 2$.

\begin{lemma}\label{lem:Fibo-identities}
    The following identities involving Fibonacci numbers hold for $k\geq0$:
    \begin{enumerate}[(i)]
        \item \label{item:fiboidentity1} $\sum_{i=0}^{2k-1}(-1)^iF_iF_{2k-i}=-F_{2k-2}$,
        \item \label{item:fiboidentity3} $\sum_{i=0}^{2k-1}F_i^2 = F_{2k-2}F_{2k+1}$.
    \end{enumerate}
\end{lemma}

\begin{proof}
    \eqref{item:fiboidentity1}.
    When $k=0$, we have
        $\sum_{i=0}^{2k-1}(-1)^iF_iF_{2k-i}=0=-F_{-2}$.
        We prove the identity by induction and use
        the other identity $F_mF_n+F_{m-1}F_{n-1}=F_{m+n+1}$
        \cite[\S 1]{MR1954396}.
        If $k\geq1$, we have
        \begin{align*}
        \sum_{i=0}^{2k-1}(-1)^iF_iF_{2k-i}
            &= F_0F_{2k} + \sum_{i=1}^{2k-2}(-1)^iF_iF_{2k-i} -F_{2k-1}F_{1}\\
            &= F_{2k} + \sum_{i=1}^{2k-2}(-1)^i\left(-F_{i-1}F_{2k-i-1}+F_{2k+1}\right) -2F_{2k-1}\\
            &= F_{2k} + \sum_{j=0}^{2(k-1)-1}(-1)^{j}F_{j}F_{2(k-1)-j}+\sum_{i=1}^{2k-2}(-1)^iF_{2k+1} -2F_{2k-1}\\
            &= F_{2k} + (-F_{2(k-1)-2}) + 0 -2F_{2k-1}\\
            &= (F_{2k-1} + F_{2k-2}) - F_{2k-4} -2F_{2k-1}\\
            &= F_{2k-2} - F_{2k-4} -F_{2k-1}\\
            &= -F_{2k-3} - F_{2k-4} = -F_{2k-2}.
        \end{align*}

    \eqref{item:fiboidentity3}.
    The proof uses the Cassini identity
    $F_{n+1}F_{n-1}-F_n^2=(-1)^n$
    and $\sum_{i=-1}^{n}F_i^2 = F_{n}F_{n+1}$ whose geometrical proof is well-known.
    See \cite[\S 1]{MR1954396} where both identities are proven.
    Thus
    \begin{align*}
        \sum_{i=0}^{2k-1}F_i^2 
        &=\sum_{i=-1}^{2k-1}F_i^2 - 1
        = F_{2k-1}F_{2k} - 1\\
        &= (F_{2k-2}+F_{2k-3})F_{2k} - 1\\
        &= F_{2k-2}F_{2k} + F_{2k-3}(2F_{2k-2}+F_{2k-3}) - 1\\
        &= F_{2k-2}(F_{2k}+2F_{2k-3}) + F_{2k-3}^2 - 1\\
        &= F_{2k-2}(F_{2k}+2F_{2k-3}) + F_{2k-2}F_{2k-4} -(-1)^{2k-3} - 1\\
        &= F_{2k-2}(F_{2k}+2F_{2k-3} + F_{2k-4})
        = F_{2k-2}F_{2k+1}.\qedhere
    \end{align*}
\end{proof}

    Let $u,w\in\{\0,\1,\2\}^*$ be two finite ternary sequences of integers.
    If $u$ and $w$ have the same value in base $2$,
    then $u\0$ and $w\0$ have also the same
    value in base $2$, since 
    the values of $u\0$ and  $w\0$ are twice those of $u$ and $w$.
    This is no longer true in the Fibonacci numeration system $\Fcal$.
    Still, the next lemma shows that the difference between the values 
    of the words $u\0$ and $w\0$ is at most 1.
    Its proof is based on linear algebra and the three identities
    involving Fibonacci numbers stated above.

\begin{lemma}\label{lem:append0-change-at-most-1}
    Let $u,w \in \Aldel^*$ be such that $\valF(u)=\valF(w)$. Then
    \begin{enumerate}[(i)]
        \item $\valF(u\0)-\valF(w\0)\in\{-1,0,+1\}$;
        \item if $w\in\Aldel^*\0\0\0$, then
            $\valF(u\0)-\valF(w\0)\in\{0,+1\}$;
        \item if $w\in\Aldel^*\1\0\1$, then
            $\valF(u\0)-\valF(w\0)\in\{-1,0\}$.
    \end{enumerate}
\end{lemma}

\begin{proof}
    (i) The proof is based on linear algebra.
    Up to padding $u$ and $w$ with $\0$'s, we can assume
    without loss of generality that $u$ and $w$ have the same length 
    and that this length is even.
    Let $f=(F_{2k-1},\dots,F_1,F_0)^t\in\R^{2k}$ be a vector of Fibonacci numbers
    where $2k$ is the length of $u$ and $w$.
    We extend $f$ to a set 
    $\{f,g_1,\dots,g_{2k-2},h\}$
    with
    $g_1=(1,-1,-1,0,\dots,0)^t,
         g_2=(0,1,-1,-1,0,\dots,0)^t,
         \dots,
         g_{2k-2}=(0,\dots,0,1,-1,-1)^t \in\R^{2k}$
         and
    $h=(F_0,-F_1,\dots,(-1)^{2k-1}F_{2k-1})^t \in\R^{2k}$.
    The set is a base of $\R^{2k}$ 
    since its vectors are linearly independent.
    Indeed,
    $c_0 f + c_1 g_1 + \ldots + c_{2k-2} g_{2k-2} + c_{2k-1} h = 0$
    implies that
    $(c_0 f + c_1 g_1 + . . . + c_{2k-2} g_{2k-2} + c_{2k-1} h) \cdot f = 0$,
    which holds if and only if
    $c_0 = 0$. Similarly we obtain that $c_{2k-1} = 0$. 
    Also, we prove that $c_i = 0$ for every $1\leq i\leq 2k-2$ by induction on $i$.

    The base is not orthogonal, but $f$ and $h$ are each orthogonal to the other vectors of the base.
    We form the invertible matrix $K=[f,g_1,\dots,g_{2k-2},h]$.
    There exists a unique vector $\alpha=(\alpha_0,\dots,\alpha_{2k-1})^t\in\R^{2k}$
    such that $\vec{u}-\vec{w}=K\alpha$,
    where we consider $\vec{u}=(u_{2k-1},\dots,u_0)^t$ and $\vec{w}=(w_{2k-1},\dots,w_0)^t$ 
    as column vectors in $\R^{2k}$.
    By hypothesis, we have that $f^t\cdot (\vec{u}-\vec{w})=0$.
    Since $f$ is orthogonal to the other vectors of the base, we have
    \[
        0
        = f^t\cdot(\vec{u}-\vec{w})
        = f^t K\alpha
        = f^t f \alpha_0.
    \]
    Therefore $\alpha_0=0$.
    Moreover, $\alpha = K^{-1}(\vec{u}-\vec{w})$.
    By definition of the inverse, the last row of $K^{-1}$ must be orthogonal to
    $f$ and to the $g_i$'s, thus it must be parallel to $h^t$. Since $K^{-1}K$
    is the identity matrix, the last row of $K^{-1}$ must be
    $\frac{1}{h^t\cdot h}h^t$.
    This allows us to express $\alpha_{2k-1}$ as
    $\alpha_{2k-1} =\frac{1}{h^t\cdot h}h^t\cdot (\vec{u}-\vec{w})$.
    Now consider the augmented vector $f'=(F_{2k},F_{2k-1},\dots,F_1,F_0)^t\in\R^{2k+1}$.
    Also let $\alpha'=(\alpha_0,\dots,\alpha_{2k-1},0)^t\in\R^{2k+1}$
    and $K'=\left(\begin{smallmatrix}K&0\\0&0\end{smallmatrix}\right)$ 
        be the matrix $K$ augmented by a zero column vector on the right
        and by a zero row vector at the bottom.
    Using Lemma~\ref{lem:Fibo-identities}~\eqref{item:fiboidentity1}
    and~\eqref{item:fiboidentity3},
    we have
    \begin{equation}\label{eq:valu0-valw0}
    \begin{aligned}
        f'^t\cdot(\vec{u\0}-\vec{w\0})
        &= f'^t\cdot K'\cdot \alpha'
        = f'^t\cdot \left(\begin{smallmatrix}f\\0\end{smallmatrix}\right)\alpha_0
            + 0
            + f'^t\cdot\left(\begin{smallmatrix}h\\0\end{smallmatrix}\right)\alpha_{2k-1}
        = f'^t\left(\begin{smallmatrix}h\\0\end{smallmatrix}\right)\alpha_{2k-1} \\
            &= \frac{h^t\cdot (\vec{u}-\vec{w})}{h^t\cdot h}\sum_{i=0}^{2k-1}(-1)^iF_iF_{2k-i}
             = \frac{h^t\cdot (\vec{u}-\vec{w})}{\sum_{i=0}^{2k-1}F_i^2}(-F_{2k-2})
             = \frac{h^t\cdot (\vec{u}-\vec{w})}{-F_{2k+1}}.
    \end{aligned}
    \end{equation}

    Using the identity $\sum_{i=0}^{2k-1}F_i = F_{2k+1}-2$
    (see \cite[\S 1]{MR1954396}),
    we deduce the upper bound
    \begin{align*}
        |f'^t\cdot(\vec{u\0}-\vec{w\0})|
        &=\left|\frac{h^t\cdot (\vec{u}-\vec{w})}{-F_{2k+1}}\right|
         \leq \Vert\vec{u}-\vec{w}\Vert_\infty\frac{\Vert h\Vert_1 }{F_{2k+1}}
         = \Vert\vec{u}-\vec{w}\Vert_\infty\frac{\sum_{i=0}^{2k-1}F_i}{F_{2k+1}}\\
            &= \Vert\vec{u}-\vec{w}\Vert_\infty\frac{F_{2k+1}-2}{F_{2k+1}}
             <\Vert\vec{u}-\vec{w}\Vert_\infty \leq 2.
    \end{align*}
    Since $\valF(u\0)-\valF(w\0)=f'^t\cdot(\vec{u\0}-\vec{w\0})$ is an integer,
    it follows that
    $\valF(u\0)-\valF(w\0)\in\{-1,0,+1\}$.

    (ii) Let $(\vec{u}-\vec{w})=(d_0,\dots,d_{2k-1})$ for some integer $k\geq2$.
    By hypothesis, we have $d_{2k-3},d_{2k-2},d_{2k-1}\geq0$. 
    Using the identity $\sum_{i=0}^{2k-4}F_i = F_{2k-2}-2$,
    we compute
    \begin{align*}
        h^t\cdot (\vec{u}-\vec{w}) 
        &= \sum_{i=0}^{2k-1}(-1)^iF_id_i
        = \sum_{i=0}^{2k-4}(-1)^iF_id_i
            - F_{2k-3}d_{2k-3}
            + F_{2k-2}d_{2k-2}
            - F_{2k-1}d_{2k-1}\\
        &\leq \Vert\vec{u}-\vec{w}\Vert_\infty\left(\sum_{i=0}^{2k-4}F_i + F_{2k-2}\right)
         \leq \Vert\vec{u}-\vec{w}\Vert_\infty\left(2F_{2k-2}-2\right)
         < 4F_{2k-2}.
    \end{align*}
    Therefore, using \eqref{eq:valu0-valw0},
    \begin{align*}
        \valF(u\0)-\valF(w\0)
            = \frac{-h^t\cdot (\vec{u}-\vec{w})}{F_{2k+1}}
            > \frac{-4F_{2k-2}}{F_{2k+1}}
            = \frac{-F_{2k+1}+F_{2k-5}}{F_{2k+1}}
            > -1.
    \end{align*}

    (iii) Let $(\vec{u}-\vec{w})=(d_0,\dots,d_{2k-1})$ for some integer $k\geq2$.
    By hypothesis, we have $d_{2k-2}\geq0$ and $d_{2k-1},d_{2k-3}\in\{-1,0,1\}$.
    We compute
    \begin{align*}
        h^t\cdot (\vec{u}-\vec{w}) 
        &= \sum_{i=0}^{2k-1}(-1)^iF_id_i
        = \sum_{i=0}^{2k-4}(-1)^iF_id_i
            - F_{2k-3}d_{2k-3}
            + F_{2k-2}d_{2k-2}
            - F_{2k-1}d_{2k-1}\\
        &\geq -\Vert\vec{u}-\vec{w}\Vert_\infty
               \sum_{i=0}^{2k-4}F_i 
               - F_{2k-3}
               - F_{2k-1}\\
        &\geq -2 (F_{2k-2}-2)
               - F_{2k-3}
               - F_{2k-1}
        = -F_{2k+1}+4.
    \end{align*}
    Therefore, using \eqref{eq:valu0-valw0},
    \[
        \valF(u\0)-\valF(w\0)
            = \frac{-h^t\cdot (\vec{u}-\vec{w})}{F_{2k+1}}
            \leq \frac{F_{2k+1}-4}{F_{2k+1}}
            < 1.\qedhere
    \]
\end{proof}

This lemma implies that there is a unique way of translating
a ternary word $ua\in\Aldel^*$, with $a\in\Aldel$, into a binary word
$wbt\in\Alsig^*$ with the same Fibonacci value, provided that $w$ is a prefix of
the translation of $u$ itself.
More precisely, let $\Sigma=\{\0,\1\}$ and $\Scal$ be the set 
\[
\Scal= \Sigma^3\setminus\Sigma^*\1\1\Sigma^* 
     = \{\0\0\0, \0\0\1, \0\1\0, \1\0\0, \1\0\1\}.
\]

\begin{lemma}\label{lem:tree-structure}
    Let $u \in \Aldel^*$,
    $w \in \Alsig^*$ and
    $s\in \Scal$ be such that $|u|=|w|$ and 
    $\valF (u) = \valF (w s)$.
    For every $a\in \Aldel$, there exist a unique
    $b\in\Alsig$ and a unique $t\in\Scal$ such that
    $\valF (u a) = \valF (w b t)$.
\end{lemma}

\begin{proof}
    Since $\valF (u) = \valF (w s)$,
    from Lemma~\ref{lem:append0-change-at-most-1}, we have that
            $\valF(u\0)-\valF(ws\0)=\gamma$ for some
            $\gamma\in\{-1,0,+1\}$.
    We compute
    \begin{align*}
            \valF(u a) - \valF(w\0\0\0\0)
            &= \valF(u\0) +  a - \valF(w\0\0\0\0)\\
            &= \valF(ws\0) + \gamma +  a - \valF(w\0\0\0\0) \\
            &= \valF(s\0) + \gamma +  a.
    \end{align*}
    If $s\in\{\0\0\1,\0\1\0,\1\0\0\}$, then
    $\valF(s\0) \in \{2,3,5\}$
    and
    $\valF(s\0)+\gamma \in \{1,2,3,4,5,6\}$.
    If $s=\0\0\0$, then
    $\valF(s\0)=0$ and
    $\gamma\in\{0,+1\}$ 
    from Lemma~\ref{lem:append0-change-at-most-1} (ii)
    so that
    $\valF(s\0)+\gamma \in \{0,1\}$.
    Finally if $s=\1\0\1$, then
    $\valF(s\0)=7$ and
    $\gamma\in\{-1,0\}$
    from Lemma~\ref{lem:append0-change-at-most-1} (iii)
    so that
    $\valF(s\0)+\gamma \in \{6,7\}$.
    In summary if $s\in\Scal$, we have $\valF(s\0)+\gamma \in \{0,\dots, 7\}$
    and $\valF(s\0)+\gamma +  a \in \{0,\dots,9\}$.  
    The map $\valF$ induces a bijection
    $\0\Scal \to \{0,1,2,3,4\}$
    and a bijection
    $\1\Scal \to \{5,6,7,8,9\}$.
    Therefore there exist a unique $b \in \{\0,\1\}$ and a unique $t\in\Scal$
    such that $\valF(s\0)+\gamma +  a = \valF(b t)$,
    which holds if and only if
    $ \valF(u a) = \valF(w\0\0\0\0) + \valF(b t)
            = \valF(wb t)$.
\end{proof}

This implies the existence of a canonical translation of ternary words into
binary words having the same value.

\begin{proposition}\label{prop:uniqueness-new}
    There exist a unique 
    length-preserving map
    $w:\{\0,\1,\2\}^*\to\{\0,\1\}^*$
    and a unique map
    $s:\{\0,\1,\2\}^*\to\Scal$,
    that we denote by $w_u:=w(u)$
    and $s_u:=s(u)$,
    such that for every $u \in \{\0,\1,\2\}^*$
    \begin{itemize}
        \item $w_u$ is a prefix of $w_{ua}$ for every $a \in \{\0,\1,\2\}$,
        \item $\valF (u) = \valF (w_u s_u)$.
    \end{itemize}
    Moreover, there exists a unique map $\lambda:\Aldel^+  \to \Alsig$
    such that $w_{ua}=w_u\lambda_{ua}$
    for every $u \in \{\0,\1,\2\}^*$ and $a \in \{\0,\1,\2\}$.
\end{proposition}

\begin{proof}
    Let $u\in\Aldel^*$. 
    We carry out the proof of existence and uniqueness by induction on the length of $u$.
    If $|u|=0$, then $w_u=\varepsilon$ is the empty word, the only word of
    length 0.
    Also $s_u=\texttt{000}$ has the property that
    $\valF(u)=\valF(\varepsilon)=0=\valF(\texttt{000})=\valF (w_u s_u)$.
    Suppose that $s'_u\in\Scal$ also satisfies the equality
        $\valF (u) = \valF (w_u s'_u)$. We have
    \[
        \valF (s_u) = \valF (w_u s_u) = \valF(u) = 0 = \valF (u) = \valF (w_u s'_u)
        = \valF (s'_u)
    \]
    implies that $s_u=\texttt{000}=s'_u$.

    (Induction hypothesis).
    Let $k\in\N$.
    Suppose that for every $u\in\Aldel^*$, with $|u|\leq k$,
    there exist a unique word $w_u\in\{\0,\1\}^*$, with $|w_u|=|u|$,
    and a unique word $s_u\in\Scal$ such that
    $w_u$ is a prefix of $w_{ua}$ for every $a\in\{\0,\1,\2\}$
    and $\valF (u) = \valF (w_u s_u)$.
    Let $v\in\Aldel^{*}$ with $|v|=k+1$.
    Let $u\in\Aldel^{k}$ and $a\in\Aldel$ such that $v=ua$.

    (Existence).
    From Lemma~\ref{lem:tree-structure},
    there exist a unique
    $b\in\Alsig$ and a unique $t\in\Scal$ such that
    $\valF(ua) = \valF (w_u b t)$.
    Therefore, we let 
    $w_{ua}=w_u b$ and $s_{ua}=t$.
    It holds that
    $w_u$ is a prefix of $w_{ua}$ and
    $\valF (ua) = \valF (w_{ua} s_{ua})$.

    (Uniqueness).
    Suppose that $w_{ua},w'_{ua}\in\{\0,\1\}^{k+1}$
    and $s_{ua},s'_{ua}\in\Scal$
    satisfy the hypothesis.
    We want to show that $w_{ua}=w'_{ua}$ and $s_{ua}=s'_{ua}$.
    From the hypothesis, $w_u$ is a prefix of $w_{ua}$ and of $w'_{ua}$,
    thus let $c,c'\in\Alsig$ be such that
    $w_{ua}=w_u c$ and
    $w'_{ua}=w_u c'$.
    By the definition of $w_{ua}$, $w'_{ua}$, $s_{ua}$ and $s'_{ua}$, we have that
    \[
        \valF (w_{u}c s_{ua})
        = \valF (w_{ua} s_{ua})
        = \valF (ua) 
        = \valF (w'_{ua} s'_{ua})
        = \valF (w_{u}c' s'_{ua}).
    \]
    From Lemma~\ref{lem:tree-structure},
    there exist a unique
    $b\in\Alsig$ and a unique $t\in\Scal$ such that
    $\valF(ua) = \valF (w_u b t)$.
    Therefore, we have $c=c'$ and
    $s_{ua}=s'_{ua}$.
    We conclude that
    $w_{ua}=w_uc=w_u c'=w'_{ua}$
    and
    $s_{ua}=s'_{ua}$.

    The existence and uniqueness of the map $\lambda$ follows
    from the existence and uniqueness of the map $w$.
\end{proof}

Therefore we have that
	for every $u \in \{\0,\1,\2\}^*$ and every $a\in\Aldel$,
    the following equation holds
\begin{equation}\label{eq:uniqueness_reformulated}
		\valF(ua) 
        = \valF(w_{ua} s_{ua})
        = \valF(w_u \lambda_{ua} s_{ua}).
\end{equation}

The values of $w_u$, $\lambda_{ua}$ and $s_{ua}$ and are shown in the table below for
small words $u\in\Aldel^*$ and $a\in\Aldel$.
\[
    \small
    \begin{array}{c|llc}
        u\cdot a   & w_u &\lambda_{ua} & s_{ua}  \\
\hline
  \varepsilon\cdot\0      & \varepsilon   & \0  & \0\0\0  \\
  \varepsilon\cdot\1      & \varepsilon   & \0  & \0\0\1  \\
  \varepsilon\cdot\2      & \varepsilon   & \0  & \0\1\0  \\
\hline
  \0\cdot\0    &  \0 & \0  & \0\0\0 \\
  \0\cdot\1    &  \0 & \0  & \0\0\1 \\
  \0\cdot\2    &  \0 & \0  & \0\1\0 \\
\hline
  \1\cdot\0    &  \0 & \0  & \0\1\0 \\
  \1\cdot\1    &  \0 & \0  & \1\0\0 \\
  \1\cdot\2    &  \0 & \0  & \1\0\1 \\
\hline
  \2\cdot\0    &  \0 & \0  & \1\0\1 \\
  \2\cdot\1    &  \0 & \1  & \0\0\0 \\
  \2\cdot\2    &  \0 & \1  & \0\0\1 \\
\hline
\end{array}
\qquad
\qquad
\begin{array}{c|llc}
    u\cdot a   & w_u &\lambda_{ua} & s_{ua} \\
\hline
  \vdots & &&\\
\hline
  \0\2\cdot\0 &\texttt{00} & \0 & \texttt{101} \\
  \0\2\cdot\1 &\texttt{00} & \1 & \texttt{000} \\
  \0\2\cdot\2 &\texttt{00} & \1 & \texttt{001} \\
\hline
  \1\0\cdot\0 &\texttt{00} & \0 & \texttt{100} \\
  \1\0\cdot\1 &\texttt{00} & \0 & \texttt{101} \\
  \1\0\cdot\2 &\texttt{00} & \1 & \texttt{000} \\
\hline
  \vdots & &&\\
\hline
  \2\2\cdot\0 &\texttt{01} & \0 & \texttt{010}  \\
  \2\2\cdot\1 &\texttt{01} & \0 & \texttt{100}  \\
  \2\2\cdot\2 &\texttt{01} & \0 & \texttt{101}  \\
\hline
  \vdots & &&\\
\end{array}
\]
It can be observed that some words behave the same, for example $u=\1$ and $v=\2\2$.
This is formalized in the following definition.

\begin{definition}[equivalence relation]\label{def:equivalence}
    Let $u, v\in \Aldel^*$.
    We say that $u$ and $v$ are equivalent,
    denoted by $u \equiv v$, if
    $s_{u} = s_{v}$ and
    $\lambda_{ua}=\lambda_{va}$ and $s_{ua}=s_{va}$
    for every $a\in\Aldel$.
\end{definition}

For example, for $\1\equiv\2\2$, as we have
$s_\1=\0\0\1=s_{\2\2}$ and for every $a\in\Aldel$, $s_{\1a}=s_{\2\2a}$ and
$\lambda_{\1a}=\lambda_{\2\2a}$.
However, $\0\2\not\equiv\1\0$.
Indeed, we have
$s_{\0\2}=\0\1\0=s_{\1\0}$,
but 
$(s_{\0\2\0},
s_{\0\2\1},
s_{\0\2\2})
=
(\texttt{101},
\texttt{000},
\texttt{001})
\neq
(\texttt{100},
\texttt{101},
\texttt{000})
=
(s_{\1\0\0},
s_{\1\0\1},
s_{\1\0\2})$.

The equivalence relation can be characterized
by a map computing the defect in the
translation of the word $u\0$ into $w_u\0\0\0\0$. 
Its value allows us to prove that two ternary words behave the same.

\begin{definition}\label{def:Delta-map}
    We define $\Delta:\Aldel^*\to\Z$ 
    for every $u\in \Aldel^*$
    as
    \begin{equation*}
        \Delta(u) = \valF(u\0) - \valF(w_u \0\0\0\0).
    \end{equation*}
\end{definition}

\begin{lemma}\label{lem:expression-for-Delta}
    For every $u\in \Aldel^*$
    and every $a\in\Aldel$, we have
    \[
        \Delta(u) = \valF(\lambda_{ua}s_{ua}) - a.
    \]
\end{lemma}

\begin{proof}
	For every $u \in \{\0,\1,\2\}^*$ and every $a\in\Aldel$, 
    using \eqref{eq:uniqueness_reformulated},
    we have
    \[
    \valF(u a) 
        = \valF(w_u \lambda_{ua} s_{ua}) 
        = \valF(w_u \0\0\0\0)
        + \valF(\lambda_{ua} s_{ua}).
    \]
    Thus, we obtain
    \begin{align*}
        \Delta(u)
        &= \valF(u\0) - \valF(w_u \0\0\0\0)\\
        &= \valF(u a) - \valF(w_u \0\0\0\0) - a\\
        &= \valF(\lambda_{ua} s_{ua})-a.\qedhere
    \end{align*}
\end{proof}

\begin{lemma}\label{lem:equivalence-vs-Delta}
    Let $u,v\in \Aldel^*$.
    We have $u\equiv v$ if and only if $s_u=s_v$ and $\Delta(u) = \Delta(v)$.
\end{lemma}

\begin{proof}
    Suppose that $u\equiv v$. Then $s_u=s_v$
    and for every $a\in\Aldel$,
    we have
    $\lambda_{ua}=\lambda_{va}$ and $s_{ua}=s_{va}$.
    Thus, using Lemma~\ref{lem:expression-for-Delta}, we have
    \[
        \Delta(u)
        = \valF(\lambda_{ua} s_{ua}) - a
        = \valF(\lambda_{va} s_{va}) - a
        = \Delta(v).
    \]

    Reciprocally,
    suppose that $s_u=s_v$ and $\Delta(u) = \Delta(v)$.
    Let $a\in\Aldel$.
    From Lemma~\ref{lem:expression-for-Delta}, we have
    \[
        \valF(\lambda_{ua} s_{ua}) 
        = \Delta(u) + a
        = \Delta(v) + a
        = \valF(\lambda_{va} s_{va}).
    \]
    We have 
    $ \valF(\lambda_{ua} s_{ua})
    = \valF(\lambda_{va} s_{va})
    \in \{0,\dots,9\}$.
    From the injectivity of $\valF:\{\0,\1\}\Scal\to\{0,1,2,3,4,5,6,7,8,9\}$,
    it follows that 
    $\lambda_{ua}=\lambda_{va}$ and 
    $s_{ua}=s_{va}$.
    Thus $u\equiv v$.
\end{proof}

In the next lemma, we prove that two words that are equivalent have equivalent
children.

\begin{lemma}\label{lem:children-inherit-equivalence}
    Let $u,v\in \Aldel^*$ and $a \in\Aldel$.
    If $u \equiv v$, then $ua \equiv va$.
\end{lemma}

\begin{proof}
    Let $u,v\in \Aldel^*$ and $a \in\Aldel$.
    From Definition~\ref{def:Delta-map}, we deduce the following
    identity
    \begin{equation}\label{eq:identity-for-Delta-ua}
    \begin{aligned}
        \Delta(ua) 
        &= \valF(ua\0)-\valF(w_{ua}\0\0\0\0)\\
        &= \left(\valF(ua)+\valF(u)+a\right)
          -\left(\valF(w_{ua}\0\0\0)+\valF(w_{ua}\0\0)\right)\\
        &= \valF(w_{ua}s_{ua})+\valF(w_u s_u)+a
          -\left(\valF(w_{ua}\0\0\0)+\valF(w_{u}\lambda_{ua}\0\0)\right)\\
        &= \valF(s_{ua})+\valF(s_u)+a
          -\valF(\lambda_{ua}\0\0).
    \end{aligned}
    \end{equation}
    The equivalence $u \equiv v$ implies $s_{u}=s_{v}$,
    $\lambda_{ua}=\lambda_{va}$ and 
    $s_{ua}=s_{va}$.
    Thus, using \eqref{eq:identity-for-Delta-ua}, we have
        \begin{align*}
            \Delta(u a) 
            &= \valF(s_{ua}) + \valF(s_{u}) + a -\valF(\lambda_{ua}\0\0)\\
            &= \valF(s_{va}) + \valF(s_{v}) + a -\valF(\lambda_{va}\0\0)
            = \Delta(va).
        \end{align*}
    We conclude from Lemma~\ref{lem:equivalence-vs-Delta} that
    $ua \equiv va$.
\end{proof}

As there are finitely many combinations of states $s_u\in\Scal$ and values
$\Delta(u)=\valF(\lambda_{u\0}s_{u\0})\in\{0,\dots,9\}$, 
the quotient $\Aldel^*|_\equiv$
contains at most $\#\Scal\cdot 10=50$ equivalence classes.
In each equivalence class, there is a word which is minimal with respect to
the radix order. The radix order is defined as $u<_{\rad}v$ either if $u$ is shorter than $v$,
or if $u$ and $v$ have the same length and $u$ is lexicographically smaller than $v$.
In the following proof, we build a graph exploring all equivalence classes
that are accessible from the equivalence class of the empty word,
for which $s_\varepsilon=\0\0\0$ and $\Delta(\varepsilon)=0$.

\begin{figure}[h]
\begin{center}
    \begin{tikzpicture}[>=latex,yscale=1.5,xscale=0.95]
        \tikzstyle{rond}=[draw,ellipse]
        \tikzstyle{carre}=[draw,rectangle,font=\tiny]
        
    \begin{scope}
        \node [coordinate] (A) at (0,.5){};
        \node (0) [rond] at (0,0) {\texttt{000}.0};
        \node[right of=0,node distance=25mm,yshift=25mm ] (0L) [carre]      {\texttt{000}.0};
        \node[right of=0,node distance=25mm            ] (1) [rond]         {\texttt{001}.2};
        \node[right of=0,node distance=25mm,yshift=-41mm] (2) [rond]        {\texttt{010}.4};
        \node[right of=1,node distance=25mm,yshift=21mm ] (10) [rond]       {\texttt{010}.3};
        \node[right of=1,node distance=25mm            ] (11) [rond]        {\texttt{100}.5};
        \node[right of=1,node distance=25mm,yshift=-17mm] (12) [rond]       {\texttt{101}.7};
        \node[right of=2,node distance=25mm,yshift=7mm ] (20) [rond]        {\texttt{101}.6};
        \node[right of=2,node distance=25mm            ] (21) [carre]       {\texttt{000}.0};
        \node[right of=2,node distance=25mm,yshift=-5mm] (22) [carre]       {\texttt{001}.2};
        \node[right of=10,node distance=30mm,yshift=5mm ] (100) [carre]     {\texttt{100}.5};
        \node[right of=10,node distance=30mm            ] (101) [carre]     {\texttt{101}.7};
        \node[right of=10,node distance=30mm,yshift=-7mm] (102) [rond]      {\texttt{000}.1};
        \node[right of=11,node distance=30mm,yshift=5mm ] (110) [carre]     {\texttt{000}.0};
        \node[right of=11,node distance=30mm            ] (111) [carre]     {\texttt{001}.2};
        \node[right of=11,node distance=30mm,yshift=-5mm] (112) [carre]     {\texttt{010}.4};
        \node[right of=12,node distance=30mm,yshift=5mm ] (120) [carre]     {\texttt{010}.3};
        \node[right of=12,node distance=30mm            ] (121) [carre]     {\texttt{100}.5};
        \node[right of=12,node distance=30mm,yshift=-5mm] (122) [carre]     {\texttt{101}.7};
        \node[right of=20,node distance=30mm,yshift=5mm ] (200) [carre]     {\texttt{001}.2};
        \node[right of=20,node distance=30mm            ] (201) [carre]     {\texttt{010}.4};
        \node[right of=20,node distance=30mm,yshift=-7mm] (202) [rond]      {\texttt{100}.6};
        \node[right of=102,node distance=25mm,yshift=7mm ] (1020) [rond]    {\texttt{001}.1};
        \node[right of=102,node distance=25mm            ] (1021) [carre]   {\texttt{010}.3};
        \node[right of=102,node distance=25mm,yshift=-5mm] (1022) [carre]   {\texttt{100}.5};
        \node[right of=202,node distance=25mm,yshift=5mm ] (2020) [carre]   {\texttt{001}.1};
        \node[right of=202,node distance=25mm            ] (2021) [carre]   {\texttt{010}.3};
        \node[right of=202,node distance=25mm,yshift=-5mm] (2022) [carre]   {\texttt{100}.5};
        \node[right of=1020,node distance=25mm,yshift=5mm ] (10200) [carre] {\texttt{001}.2};
        \node[right of=1020,node distance=25mm            ] (10201) [carre] {\texttt{010}.4};
        \node[right of=1020,node distance=25mm,yshift=-5mm] (10202) [carre] {\texttt{100}.6};
    \draw[->] (A)  --  (0);

    \draw[->] (0)  -- node[above]{\scriptsize $\0/\0$} (0L);
    \draw[->] (0)  -- node[above]{\scriptsize $\1/\0$} (1);
    \draw[->] (0)  -- node[above right]{\scriptsize $\2/\0$} (2);
    \draw[->] (1)  -- node[above]{\scriptsize $\0/\0$} (10);
    \draw[->] (1)  -- node[above]{\scriptsize $\1/\0$} (11);
    \draw[->] (1)  -- node[above]{\scriptsize $\2/\0$} (12);
    \draw[->] (2)  -- node[fill=white,inner sep=1pt]{\scriptsize $\0/\0$} (20);
    \draw[->] (2)  -- node[fill=white,inner sep=1pt]{\scriptsize $\1/\1$} (21);
    \draw[->] (2)  -- node[fill=white,inner sep=1pt]{\scriptsize $\2/\1$} (22);
    \draw[->] (10)  -- node[fill=white,inner sep=1pt]{\scriptsize $\0/\0$} (100);
    \draw[->] (10)  -- node[fill=white,inner sep=1pt]{\scriptsize $\1/\0$} (101);
    \draw[->] (10)  -- node[fill=white,inner sep=1pt]{\scriptsize $\2/\1$} (102);
    \draw[->] (11)  -- node[fill=white,inner sep=1pt]{\scriptsize $\0/\1$} (110);
    \draw[->] (11)  -- node[fill=white,inner sep=1pt]{\scriptsize $\1/\1$} (111);
    \draw[->] (11)  -- node[fill=white,inner sep=1pt]{\scriptsize $\2/\1$} (112);
    \draw[->] (12)  -- node[fill=white,inner sep=1pt]{\scriptsize $\0/\1$} (120);
    \draw[->] (12)  -- node[fill=white,inner sep=1pt]{\scriptsize $\1/\1$} (121);
    \draw[->] (12)  -- node[fill=white,inner sep=1pt]{\scriptsize $\2/\1$} (122);
    \draw[->] (20)  -- node[fill=white,inner sep=1pt]{\scriptsize $\0/\1$} (200);
    \draw[->] (20)  -- node[fill=white,inner sep=1pt]{\scriptsize $\1/\1$} (201);
    \draw[->] (20)  -- node[fill=white,inner sep=1pt]{\scriptsize $\2/\1$} (202);
    \draw[->] (102)  -- node[fill=white,inner sep=1pt]{\scriptsize $\0/\0$} (1020);
    \draw[->] (102)  -- node[fill=white,inner sep=1pt]{\scriptsize $\1/\0$} (1021);
    \draw[->] (102)  -- node[fill=white,inner sep=1pt]{\scriptsize $\2/\0$} (1022);
    \draw[->] (202)  -- node[fill=white,inner sep=1pt]{\scriptsize $\0/\1$} (2020);
    \draw[->] (202)  -- node[fill=white,inner sep=1pt]{\scriptsize $\1/\1$} (2021);
    \draw[->] (202)  -- node[fill=white,inner sep=1pt]{\scriptsize $\2/\1$} (2022);
    \draw[->] (1020)  -- node[fill=white,inner sep=1pt]{\scriptsize $\0/\0$} (10200);
    \draw[->] (1020)  -- node[fill=white,inner sep=1pt]{\scriptsize $\1/\0$} (10201);
    \draw[->] (1020)  -- node[fill=white,inner sep=1pt]{\scriptsize $\2/\0$} (10202);
    \end{scope}
    \end{tikzpicture}
    \caption{The edges represent the directed labeled graph $G$, which
        can be folded (after merging equivalent states) into the
        Mealy machine $\Zcal$ equivalent to the Berstel adder $\Bcal$.
        A vertex reached from a path $u$ has an ellipse shape if and only if
        $u$ is minimal with respect to the radix order within the equivalence
        class $[u]_\equiv$ and has a rectangle shape if $u\equiv v$ for some
        word $v<_{\rad} u$.}
    \label{fig:tree}
\end{center}
\end{figure}

\begin{proof}[Proof of Theorem~\ref{thm:Berstel}]
Let $G$ be the directed labeled graph 
$G = (V, E)$ where 
    \begin{align*}
        V &= \{[u]_\equiv\colon u\in\Aldel^*\},\\
        E &=\{([u]_\equiv,[ua]_\equiv)\in V\times V\colon u\in\Aldel^*,a\in\Aldel\},
    \end{align*}
which is well-defined
following Lemma~\ref{lem:children-inherit-equivalence}.
The connected component of the equivalence class of the empty word 
in the graph $G$ is shown in Figure \ref{fig:tree}
where vertices $[u]_\equiv$ are denoted by $s_{u}.\Delta(u)$,
following Lemma~\ref{lem:equivalence-vs-Delta},
and edges are shown as
$ [u]_\equiv \xrightarrow{a/\lambda_{ua}} [ua]_\equiv$.

It follows from the exploration of the connected component of the equivalence
of the empty word in radix order
that for every $u\in\Aldel^*$, we have $0\leq\Delta(u)\leq7$; see Figure~\ref{fig:tree}.

The cardinality of $V$ is 10 (the 10 vertices shown with an ellipse
shape in Figure~\ref{fig:tree}) and the directed labeled graph $G$ can be
equivalently described by a Mealy machine.
Let $\Zcal = (Q,S_0, A, B, \delta_\Zcal, \eta_\Zcal, \phi_\Zcal)$ be the Mealy machine
such that 
\begin{itemize}
    \item $Q = \{s_u.\Delta(u)\colon [u]_\equiv\in V\}$,
    $S_0 = \0\0\0.0$, $A = \Aldel$, $B = \Alsig$,
    \item for every $s_{u}.\Delta(u)\in Q$ and $a \in A$, 
        $\delta_\Zcal(s_u.\Delta(u),a) = s_{ua}.\Delta(u a)$  
        and $\eta_\Zcal(s_u.\Delta(u),a) = \lambda_{ua}$.
    \item for every $s_{u}.\Delta(u)\in Q$,
        $\phi_\Zcal(s_u.\Delta(u)) = s_{u}$.
\end{itemize}
The Mealy machine $\Zcal$ computes $w_u$ and $s_u$ in the sense that $\Zcal(u)=w_u$ and
    $\Zcal_\downarrow(u)=s_u$ for every word $u\in\Aldel^*$.
    Therefore, if $u = \varepsilon$, then $\Zcal(u) = \varepsilon$,
    $\Zcal_{\downarrow}(u) = \0\0\0$ and 
    \[
      \valF(u)
        = \valF(\varepsilon) 
        = 0 
        = \valF(\0\0\0) 
        = \valF(\Zcal(u)\cdot \Zcal_{\downarrow}(u)).
    \]
    If $u\in\Aldel^*$ and $a\in\Aldel$,
    then $\Zcal(u a) = w_u \lambda_{ua}$ and $\Zcal_{\downarrow}(u a)=s_{ua}$,
    and by Equation~\eqref{eq:uniqueness_reformulated}, 
    \[
    \valF(u a)
    = \valF(w_u \lambda_{ua} s_{ua})
    = \valF(\Zcal(u a) \cdot \Zcal_{\downarrow}(u a)).
    \]
    This ends the proof as the machine $\Zcal$ is equivalent to the Berstel
    adder $\Bcal$ reproduced in Figure~\ref{fig:berstel_transducer}.
\end{proof}

\bibliographystyle{myalpha} %

\bibliography{biblio}

\end{document}